\documentclass[a4paper,11pt,conference,fleqn,final]{IEEEtran}
\usepackage[cmex10]{amsmath}
\usepackage{times,epsfig,amsfonts,amsthm, amssymb}

\newtheorem{definition}{Definition}
\newtheorem{lemma}{Lemma}
\newtheorem{theorem}{Theorem}

\ifCLASSINFOpdf
\else
\fi

\begin{document}
%
\title{Stochastic Service Guarantee Analysis Based on Time-Domain Models}

\author{\IEEEauthorblockN{Jing Xie}
\IEEEauthorblockA{Department of Telematics\\
Norwegian University of Science and Technology\\
Email: jingxie@item.ntnu.no}
\and
\IEEEauthorblockN{Yuming Jiang}
\IEEEauthorblockA{Department of Telematics \& Q2S Center\\
Norwegian University of Science and Technology\\
Email: ymjiang@ieee.org}
}


%


\maketitle

\begin{abstract}
Stochastic network calculus is a theory for stochastic service guarantee analysis of computer communication networks. In the current stochastic network calculus literature, its traffic and server models are typically defined based on the cumulative amount of traffic and cumulative amount of service respectively.~However, there are network scenarios where the applicability of such models is limited, and hence new ways of modeling traffic and service are needed to address this limitation.~This paper presents {\em time-domain} models and results for stochastic network calculus.~Particularly, we define traffic models, which are defined based on probabilistic lower-bounds on {\em cumulative packet inter-arrival time}, and server models, which are defined based on probabilistic upper-bounds on {\em cumulative packet service time}.~In addition, examples demonstrating the use of the proposed time-domain models are provided. On the basis of the proposed models, the five basic properties of stochastic network calculus are also proved, which implies broad applicability of the proposed {\em time-domain} approach.
\end{abstract}


%
\IEEEpeerreviewmaketitle

\section{Introduction}\label{Sec-intro}
Stochastic network calculus is a theory dealing with queueing systems found in computer communication networks \cite{Change:ExStoLinMax}\cite{Fidler:End2end}\\\cite{Jiang:BasicStoNet}\cite{Jiang:book}. It is particularly useful for analyzing networks where service guarantees are provided stochastically.~Such networks include wireless networks, multi-access networks and multimedia networks where applications can tolerate some certain violation of the desired performance \cite{Ferrari:Service}.  

Stochastic network calculus is based on properly defined traffic models \cite{Change:traffic}\cite{Jiang:BasicStoNet}\cite{Jiang:book}\cite{Li:traffic}\cite{David:SBB}\cite{Yaron:traffic} and server models \cite{Jiang:BasicStoNet}\cite{Jiang:book}. In the existing models of stochastic network calculus, an arrival process and a service process are typically modeled by some stochastic arrival curve, which probabilistically upper-bounds the {\em cumulative amount of arrival},~and respectively by some stochastic service curve, which probabilistically lower-bounds the {\em cumulative amount of service}.~In this paper, we call such models {\em space-domain} models.~Based on the {\em space-domain} traffic and server models, a lot of results have been derived for stochastic network calculus.~Among the others, the most fundamental ones are the five basic properties \cite{Jiang:BasicStoNet} \cite{Jiang:book}:~(P.1)~{\em Service Guarantees} including delay bound and backlog bound;~(P.2)~{\em Output Characterization};~(P.3)~{\em Concatenation Property}; (P.4) {\em Leftover Service}; (P.5) {\em Superposition Property}. Examples demonstrating the necessity of having these basic properties and their use can be found \cite{Jiang:BasicStoNet} \cite{Jiang:book}.

Nevertheless, there are still many open research challenges for stochastic network calculus, and a critical one is {\em time-domain modeling and analysis} \cite{Jiang:book}.~Time-domain modeling for service guarantee analysis has its root from the deterministic Guaranteed Rate (GR) server model \cite{Goyal:GR}, where service guarantee is captured by comparing with a (deterministic) virtual time function in the time-domain.~This time-domain model has been extended to design aggregate-scheduling networks to support per-flow (deterministic) service guarantees \cite{Cobb:Qua}\cite{jiang:Time}, while few such results are available from space-domain models. Other network scenarios where time-domain modeling may be preferable include wireless networks and multi-access networks. 

In wireless networks, the varying link condition may cause failed transmission when the link is in \lq bad\rq~condition.~The sender may hold until the link condition becomes \lq good\rq~or re-transmit. For such cases, it is difficult to directly find the stochastic service curve in the space-domain because we need to characterize the stochastic nature of the impaired service caused by the \lq bad\rq~link condition.~A possible way is that we use an impairment process \cite{Jiang:BasicStoNet} to characterize the impaired service.~However, how to define and find the impairment process arises another difficulty. Even though we can define an impairment process, we may first convert the impairment process into some existing stochastic network calculus models, and then further analyze the performance bounds. The obtained performance bounds may become loose because of such conversion. If we characterize the serivce process in the time-domain, we can use random variables to represent the time intervals when the link is in \lq bad\rq~condition. Analyzing the stochastic nature of such random variables would be easier. In addition, this way can avoid the difference introduced by the intermediate conversion. 

In contention-based multi-access networks, backoff schemes are often employed to reduce collision occuring. Because the backoff process is characterized by backoff windows which may vary with the different backoff stages, it is quite cumbersome for a space-domain server model to characterize the service process with the consideration of the backoff process. This also prompts the possibility of characterzing the service process in the time-domain. Having said this, however, how to define a stochastic version of the virtual time function and how to perform the corresponding analysis are yet open \cite{Jiang:book}.

The objective of this paper is to define traffic models and server models in the {\em time-domain} and derive the corresponding five basic properties for stochastic network calculus. Particularly, we define traffic models that are based on probabilistic lower bounds on {\em cumulative packet inter-arrival time}. Also, we define server models that are based on some virtual time function and probabilistic upper bounds on {\em cumulative packet service time}. In addition, we establish relationships among the proposed time-domain models, and the mappings between the proposed time-domain models and the existing space-domain models. Furthermore, we prove the five basic properties based on the proposed time-domain models.

The remainder is structured as follows. Sec.~\ref{Sec-Background} introduces the mathematical background and fundamental space-domain models and relevant results of stochastic network calculus. In Sec.~\ref{Section-Model}, we first introduce the time-domain deterministic traffic and server models, and then extend them to stochastic versions.~In addition, the relationships among them as well as with some existing space-domain models are established.~Sec.~\ref{Sec-Property} explores the five basic properties. Sec.~\ref{Sec-Conclusion} summarizes the work.

\section{Notation and Background}\label{Sec-Background}
To ease expression, we assume networks with \textbf{fixed unit length}\footnote{The results can also be extended to networks with variable-length packets while the expression and results will be more complicated.} packets.~By convention, we assume that a packet is considered to be received by a network element when and only when its last bit has arrived to the network element, and a packet is considered out of a network element when and only when its last bit has been transmitted by the network element.~A packet can be served only when its last bit has arrived.~All queues are assumed to be empty at time $0$. Packets within a flow are served in the first-in-first-out (FIFO) order. 

\subsection{Notation}
Let $p^n$, $r(n)$, $a(n)$ and $d(n)$ $(n=0,1,2,...)$ denote the $n^{th}$ packet of a flow, its allocated service rate, its arrival time and its departure time, respectively.~Let $\mathcal{A}(t)$ and $\mathcal{A}^*(t)$ respectively denote the number of cumulative arrival packets and the number of cumulative departure packets by time $t$. By convention, we assume $a(0)=0$, $d(0)=0$, $\mathcal{A}(0)=0$ and $\mathcal{A}^*(0)=0$. For any $0\leq s\leq t$, we denote $\mathcal{A}(s,t)\equiv\mathcal{A}(t)-\mathcal{A}(s)$ and $\mathcal{A}^*(s,t)\equiv\mathcal{A}^*(t)-\mathcal{A}^*(s)$. 

In this paper, $a(n)$ and $\mathcal{A}(t)$ will be used to represent an arrival process interchangeably. A departure process will be represented by $d(n)$ and $\mathcal{A}^*(t)$ interchangeably.

The following function sets are often used in this paper. Specifically, we use $\mathcal{G}$ to denote the set of non-negative wide-sense increasing functions as follows:
\begin{displaymath}
\small{\mathcal{G} = \{g(\cdot): \forall 0\leq x\leq y, 0 \leq g(x) \leq g(y)\}}
\end{displaymath}
We denote by $\bar{\mathcal{G}}$ the set of non-negative wide-sense decreasing functions:
\begin{displaymath}
\small{\bar{\mathcal{G}} = \{g(\cdot): \forall 0\leq x\leq y, 0 \leq g(y) \leq g(x)\}}
\end{displaymath}
Let $\bar{\mathcal{F}}$ denote the set of functions in $\bar{\mathcal{G}}$, where for each function $f(\cdot)\in\bar{\mathcal{F}}$, its nth-fold integration, denoted by $f^{(n)}(x)\equiv\big(\int_{x}^{\infty}dy\big)^nf(y)$, is bounded for $\forall x\geq 0$ and still belongs to $\bar{\mathcal{F}}$ for $\forall n\geq 0$, or 
\begin{displaymath}
\bar{\mathcal{F}} = \big\{f(\cdot):\forall n\geq 0, \big(\int_{x}^{\infty}dy\big)^nf(y)\in\bar{\mathcal{F}}\big\}. 
\end{displaymath}
For ease of exposition, we adopt
\begin{displaymath}
[x]^{+} \equiv \max[0,x] ~~{and}~~ [x]_1 \equiv \min[1,x], 
\end{displaymath}
and assume that for any bounding function $f(x)$, $f(x)=1$ for $\forall x<0$. 

\subsection{Max-plus and Min-plus Algebra Basics}
An essential idea of (stochastic) network calculus is to use alternate algebras particularly the min-plus algebra and max-plus algebra \cite{Boudec:calculus} to transform complex non-linear network systems into analytically tractable linear systems \cite{Jiang:book}. To the best of our knowledge, the existing models and results of stochastic network calculus are mainly under the {\em space-domain} and based on min-plus algebra that has basic operations particularly suitable for characterizing cumulative arrival and cumulative service.~For characterizing arrival and service processes in the \emph{time-domain}, interestingly, the max-plus algebra has basic operations that well suit the need. 
 
In this paper, the following \emph{max-plus} and \emph{min-plus} operations are often used:
\begin{itemize}
\item \emph{Max-Plus Convolution} of $g_1$ and $g_2$ is 
\begin{displaymath}
(g_1 \bar{\otimes} g_2)(x) = \sup_{0 \leq y \leq x}\{g_1(y) + g_2(x - y)\}
\end{displaymath}
\item \emph{Max-Plus Deconvolution} of $g_1$ and $g_2$ is 
\begin{displaymath}
(g_1 \bar{\oslash} g_2)(x) = \inf_{y\geq 0}\{g_1(x + y) - g_2(y)\}
\end{displaymath}
\item \emph{Min-Plus Convolution} of $g_1$ and $g_2$ is 
\begin{displaymath}
(g_1\otimes g_2)(x) = \inf_{0 \leq y \leq x}\{g_1(y) + g_2(x - y)\}
\end{displaymath}
\item \emph{Min-Plus Deconvolution} of $g_1$ and $g_2$ is 
\begin{displaymath}
(g_1\oslash g_2)(x) = \sup_{y \geq 0}\{g_1(x + y) - g_2(y)\}
\end{displaymath}
\end{itemize}
In this paper, when applying \emph{supremum} and \emph{infimum}, they may be interpreted as \emph{maximum} and \emph{minimum} whenever appropriate, respectively. 

\subsection{Preliminaries}
The following lemma is often used for later analysis and thus listed: 
\begin{lemma}\label{lemma1}
For the sum of a collection of random variables $Z = \sum_{i=1}^n X_i$, no matter whether they are independent or not, there holds for the complementary cumulative distribution function (CCDF) of $Z$: (See Lemma 1.5 in \cite{Jiang:book})
\begin{equation}
\small{\bar{F}_Z(z) \leq \bar{F}_{X_1}\otimes\cdot\cdot\cdot\otimes\bar{F}_{X_n}(z)}
\label{eq:CCDF}
\end{equation}
where $\bar{F}_Z = P\{Z > z\}$, $-\infty < z \leq \infty$.
\end{lemma}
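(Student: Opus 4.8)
The plan is to prove the inequality by induction on the number $n$ of summands, with the two-variable case carrying all of the probabilistic content and the inductive step being purely algebraic. The only probabilistic tool I expect to need is the union bound, which is precisely why the claim can hold \emph{without} any independence assumption.

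For the base case $n=2$, I would write $Z = X_1 + X_2$, fix any $y$ with $0 \le y \le z$, and first establish the event inclusion
\begin{displaymath}
\{X_1 + X_2 > z\} \subseteq \{X_1 > y\} \cup \{X_2 > z-y\},
\end{displaymath}
which follows by contraposition: if simultaneously $X_1 \le y$ and $X_2 \le z-y$, then $X_1 + X_2 \le z$. Applying the union bound to the right-hand side gives
\begin{displaymath}
P\{Z > z\} \le P\{X_1 > y\} + P\{X_2 > z-y\} = \bar{F}_{X_1}(y) + \bar{F}_{X_2}(z-y).
\end{displaymath}
Since this holds for every admissible $y$ and each such right-hand side is never smaller than $P\{Z>z\}$, I may take the infimum over $0 \le y \le z$ without violating the bound, which is exactly $\bar{F}_{X_1}\otimes\bar{F}_{X_2}(z)$.

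For the inductive step, I would split $Z = \big(\sum_{i=1}^{n-1} X_i\big) + X_n$, apply the $n=2$ result to obtain $\bar{F}_Z \le \bar{F}_{\sum_{i<n} X_i}\otimes\bar{F}_{X_n}$, and then insert the induction hypothesis $\bar{F}_{\sum_{i<n} X_i} \le \bar{F}_{X_1}\otimes\cdots\otimes\bar{F}_{X_{n-1}}$. Closing the argument needs two elementary properties of min-plus convolution: monotonicity (if $g \le h$ pointwise then $g \otimes f \le h \otimes f$, immediate from the definition since the quantity inside the infimum increases termwise) and associativity (so that the iterated convolution is unambiguous). Combining these turns the induction hypothesis into $\bar{F}_Z \le \bar{F}_{X_1}\otimes\cdots\otimes\bar{F}_{X_n}(z)$, as desired.

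I do not anticipate a genuine obstacle; the subtle points are bookkeeping rather than conceptual. The main one is the treatment of boundary values and the domain of the infimum: the union-bound inequality is valid for every real $y$, whereas the convolution ranges only over $0 \le y \le z$, so I should check that restricting the infimum to this interval still yields a valid upper bound (it does, since each term in it already dominates $P\{Z>z\}$), and I should invoke the paper's convention $f(x)=1$ for $x<0$ to absorb arguments that fall outside the natural support. I would also state explicitly, as the key conceptual point, that independence is never used anywhere: subadditivity of probability under union holds regardless of the joint law of the $X_i$.
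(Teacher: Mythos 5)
Your proof is correct: the union bound gives the two-variable case without any independence assumption, and monotonicity plus associativity of min-plus convolution close the induction. The paper itself offers no proof of this lemma (it simply cites Lemma~1.5 of \cite{Jiang:book}), and your argument is exactly the standard one behind that cited result, so there is nothing to reconcile.
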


For later analysis, we need some transformation between the number of cumulative arrival packets by time $t$, i.e., $\mathcal{A}(t)$, and the time of a packet arriving to the system, i.e., $a(n)$. 

If $\mathcal{A}(t)$ is upper-bounded with respect to some function $\alpha(t)\in\mathcal{G}$, we have the following lemma.
\begin{lemma}\label{spacedomain}
For function $\alpha(t)\in\mathcal{G}$, there holds:
\begin{enumerate}
\item the following statements are equivalent: 
\begin{enumerate}
\item $\forall 0\leq s\leq t$, $\mathcal{A}(s,t)\leq \alpha(t-s)+x$ for $\forall x\geq 0$;
\item $\forall t\geq 0$, $\mathcal{A}(t)\leq \mathcal{A}\otimes\alpha(t)+x$ for $\forall x\geq0$;
\end{enumerate}
\item if $\forall t,x\geq 0$, $\mathcal{A}(t)\leq \mathcal{A}\otimes\alpha(t)+x$ holds, then we have $a(n) \geq a\bar{\otimes}\lambda(n)-y$, where $\lambda(n)\in\mathcal{G}$ is the \emph{inverse} function of $\alpha(t)$ and defined as follows
\begin{equation}
\lambda(n) = \inf\{\tau: \alpha(\tau)\geq n\}
\label{eq:inversealpha} 
\end{equation}
and
\begin{equation}
y= \sup_{k\geq 0}[\lambda(k)-\lambda(k-x)].
\label{eq:yinversefunction}
\end{equation}
\end{enumerate}
\end{lemma}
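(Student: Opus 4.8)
The plan is to treat the two parts separately, using Part~1 as a lemma that feeds into Part~2. For Part~1 I would establish the equivalence by the standard min-plus argument, reading the convolution pointwise. For the direction (a)$\Rightarrow$(b), I rewrite $\mathcal{A}(s,t)\leq\alpha(t-s)+x$ as $\mathcal{A}(t)\leq\mathcal{A}(s)+\alpha(t-s)+x$ for every $s\in[0,t]$ and take the infimum over $s$ on the right-hand side; since $x$ is constant this yields $\mathcal{A}(t)\leq(\mathcal{A}\otimes\alpha)(t)+x$. Conversely, for (b)$\Rightarrow$(a), I use that the infimum defining $(\mathcal{A}\otimes\alpha)(t)$ is dominated by the single term $\mathcal{A}(s)+\alpha(t-s)$ for any fixed $s\leq t$, so $\mathcal{A}(t)\leq(\mathcal{A}\otimes\alpha)(t)+x\leq\mathcal{A}(s)+\alpha(t-s)+x$, which rearranges to (a). This equivalence holds for each fixed $x\geq0$.

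For Part~2 the essential device is the pseudo-inverse duality between the counting function $\mathcal{A}(t)$ and the arrival-time sequence $a(n)$: under the paper's conventions arrivals occur at distinct, ordered instants, so $\mathcal{A}(a(n))=n$ for every $n$. I would first invoke Part~1 to pass from hypothesis (b) to the pointwise bound $\mathcal{A}(s,t)\leq\alpha(t-s)+x$. Then, fixing $0\leq m\leq n$ and substituting the specific time points $s=a(m)$ and $t=a(n)$, the left-hand side equals $\mathcal{A}(a(m),a(n))=n-m$, so that $n-m\leq\alpha(a(n)-a(m))+x$, i.e.\ $\alpha(a(n)-a(m))\geq n-m-x$.

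The next step inverts this inequality through $\lambda$. By the definition $\lambda(k)=\inf\{\tau:\alpha(\tau)\geq k\}$, the fact that $a(n)-a(m)$ lies in the set $\{\tau:\alpha(\tau)\geq n-m-x\}$ gives $a(n)-a(m)\geq\lambda(n-m-x)$. I then compare $\lambda(n-m-x)$ with $\lambda(n-m)$: choosing $k=n-m$ in the definition of $y$ in \eqref{eq:yinversefunction} yields $\lambda(n-m)-\lambda(n-m-x)\leq y$, hence $a(n)-a(m)\geq\lambda(n-m)-y$, that is $a(n)\geq a(m)+\lambda(n-m)-y$. Since $m$ was arbitrary in $[0,n]$, taking the supremum over $m$ and recognizing the max-plus convolution gives $a(n)\geq\sup_{0\leq m\leq n}\{a(m)+\lambda(n-m)\}-y=(a\bar{\otimes}\lambda)(n)-y$, as required.

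The main obstacle I anticipate is the careful handling of the duality at the two substituted time points and the boundary behaviour of $\lambda$. One must justify that evaluating the cumulative bound exactly at $s=a(m)$ and $t=a(n)$ is legitimate and that $\mathcal{A}(a(m),a(n))=n-m$ under the adopted conventions (distinct, ordered, unit-length arrivals); any tie-breaking or left/right-continuity subtlety in the $\mathcal{A}\leftrightarrow a$ correspondence must be resolved precisely here. A secondary point is the regime $n-m<x$, where the argument $n-m-x$ is negative; the conventions that bounding functions equal $1$ below zero and that $\lambda(k)=0$ for $k\leq0$ make the resulting bound $a(n)-a(m)\geq0$ vacuously true, so the final supremum step remains valid. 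Everything else is routine manipulation once the duality is set up.
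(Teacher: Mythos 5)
Your proposal is correct and follows essentially the same route as the paper's own proof: the same sup/inf rearrangement for the equivalence in Part~1, and for Part~2 the same substitution of arrival instants into the cumulative bound, inversion via the pseudo-inverse $\lambda$, and the bound $\lambda(n-m)-\lambda(n-m-x)\leq y$ before taking the supremum over $m$. The only cosmetic difference is that the paper evaluates the counting bound at $a_{+}(n)=a(n)+\epsilon$ with $\epsilon\rightarrow 0$ to sidestep the continuity issue you flag, whereas you resolve it by convention; both are fine.
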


\begin{proof}
(1) For $(a)\rightarrow(b)$, from the condition, we obtain $\mathcal{A}(s,t)-\alpha(t-s)-x\leq 0$ for $\forall~0\leq s\leq t$. Then, there holds
\begin{displaymath}
~~~~~~~~\sup_{0\leq s\leq t}[\mathcal{A}(s,t)-\alpha(t-s)-x]\leq 0
\end{displaymath}
which implies 
\begin{displaymath}
~~~~~~~~\mathcal{A}(t)-\inf_{0\leq s\leq t}[\mathcal{A}(s)+\alpha(t-s)]-x\leq 0.
\end{displaymath}
Thus, we conclude $\mathcal{A}(t)\leq \mathcal{A}\otimes\alpha(t)+x$ for $\forall t,x\geq 0$.

For $(b)\rightarrow(a)$, from the condition, we have 
\begin{displaymath}
~~~~~~~~\mathcal{A}(t)-\inf_{0\leq s\leq t}[\mathcal{A}(s)+\alpha(t-s)]-x\leq 0
\end{displaymath}
 which implies 
\begin{displaymath}
~~~~~~~~\sup_{0\leq s\leq t}[\mathcal{A}(s,t)-\alpha(t-s)-x]\leq 0.
\end{displaymath}
Then there must be $\mathcal{A}(s,t)-\alpha(t-s)-x\leq 0$ for $\forall 0\leq s\leq t$. Thus, $\mathcal{A}(s,t)\leq \alpha(t-s)+x$ holds for $\forall 0\leq s\leq t$ and $\forall x\geq 0$.

(2) From (1), we know that $\mathcal{A}(t)\leq \mathcal{A}\otimes\alpha(t)+x$ is equivalent to $\mathcal{A}(s,t)\leq \alpha(t-s)+x$ for $\forall 0\leq s\leq t$ and $\forall x\geq 0$. Then for $\forall 0\leq m \leq n$, we have
\begin{displaymath}
~~~~~~~~\mathcal{A}\big(a(m),a_{+}(n)\big)\leq\alpha\big(a_{+}(n)-a(m)\big)+x
\end{displaymath}
where $a_{+}(n)=a(n)+\epsilon$ with $\epsilon\rightarrow 0$. We also know 
\begin{displaymath}
n-m \leq \mathcal{A}\big(a(m),a_{+}(n)\big)\leq \alpha\big(a_{+}(n)-a(m)\big)+x
\end{displaymath}
Taking the inverse function of $\alpha\big(a_{+}(n)-a(m)\big)$ yields 
\begin{displaymath}
a_{+}(n)-a(m) \geq \lambda(n-m-x)~~~~~~~~~~~~~~~~~~~~~~~~~~~~~~~~~~~~~~~~~~~~~~
\end{displaymath}
\begin{displaymath}
~~~~= \lambda(n-m) - [\lambda(n-m)-\lambda(n-m-x)]
\end{displaymath}
\begin{equation}
~~~~~~~~\geq \lambda(n-m) - \sup_{n-m\geq 0}[\lambda(n-m)-\lambda(n-m-x)]
\label{eq:proofstep1}
\end{equation}
Let $k=n-m$. Eq.(\ref{eq:proofstep1}) can be written as 
\begin{displaymath}
a_{+}(n)-a(m)\geq\lambda(n-m)-\sup_{k\geq 0}[\lambda(k)-\lambda(k-x)]
\end{displaymath}
from which we obtain 
\begin{equation}
~~~~~~~~a(n)\geq a(m)+ \lambda(n-m) - y
\label{eq:proofstep2}
\end{equation}
because $\epsilon\rightarrow 0$ and $y=\sup_{k\geq 0}[\lambda(k)-\lambda(k-x)]$. 
Since Eq.(\ref{eq:proofstep2}) holds for $\forall 0\leq m \leq n$, we have
\begin{displaymath}
a(n) \geq \sup_{0\leq m\leq n}[a(m)+\lambda(n-m)-y]=a\bar{\otimes}\lambda(n)-y.
\end{displaymath}
\end{proof}

\textbf{Example 1.} Suppose the number of cumulative arrival packets of a flow, $\mathcal{A}(t)$, is upper-bounded by $\alpha(t)+x$ for $t\geq 0$, where $\alpha(t)=\rho\cdot t+\sigma$. Let $n\equiv\alpha(t)$. We can get the inverse function of $\alpha(t)$, $\lambda(n)=\frac{(n-\sigma)^+}{\rho}$. We can use Eq.(\ref{eq:yinversefunction}) to get $y$. For $\forall k\geq 0$, we have
\begin{displaymath}
~~~~~~~~y=\sup_{k\geq 0}\Big\{\frac{(k-\sigma)^+}{\rho}-\frac{(k-\sigma-x)^+}{\rho}\Big\}
\end{displaymath}
\begin{displaymath}
~~~~~~~~~~=\begin{cases}
\frac{x}{\rho}~~~~~~~~ \text{k}\geq\sigma+x,\\
< \frac{x}{\rho}~~~~~\sigma\leq\text{k}<\sigma+x,\\
0~~~~~~~~\text{k}<\sigma.
\end{cases}
\end{displaymath}
from which we get $y=\frac{x}{\rho}$. Then, we know that for any packet, its arrival time satisfies
\begin{displaymath}
a(n)\geq \sup_{0\leq m\leq n}\big[a(m)+\frac{(n-m-\sigma)^+}{\rho}\big]-\frac{x}{\rho}.
\end{displaymath} 

If $a(n)$ is lower-bounded with respect to some function $\lambda(n)\in\mathcal{G}$ , we have the following lemma. 
\begin{lemma}\label{timedomainlemma}
For function $\lambda(n)\in\mathcal{G}$, there holds:
\begin{enumerate}
\item the following statements are equivalent:
\begin{enumerate} 
\item $\forall 0\leq m \leq n$, $a(n)-a(m)\geq\lambda(n-m)-y$ for $\forall y\geq 0$;  
\item $\forall n\geq 0$, $a(n) \geq a\bar{\otimes}\lambda(n)-y$ for $\forall y\geq 0$;
\end{enumerate}
\item if $\forall n,y\geq 0$, $a(n) \geq a\bar{\otimes}\lambda(n)-y$ holds, then we have $\mathcal{A}(t)\leq \mathcal{A}\otimes\alpha(t)+x$, where $\alpha(t)\in\mathcal{G}$ is the inverse function of $\lambda(n)$ and defined as follows
\begin{equation}
\alpha(t) = sup\{k: \lambda(k)\leq t\}
\end{equation}
and 
\begin{equation}
x = \sup_{\tau\geq 0}[\alpha(\tau+y)-\alpha(\tau)+1].
\label{eq:xvalue}
\end{equation}
\end{enumerate}
\end{lemma}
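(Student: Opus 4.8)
The plan is to treat the two parts separately: part~(1) is the max-plus mirror of the corresponding step in Lemma~\ref{spacedomain}, proved directly, while part~(2) is handled by turning the time-domain bound into a packet-counting bound and then invoking Lemma~\ref{spacedomain}(1). For part~(1) I would argue the equivalence for each fixed $y\geq 0$. For $(a)\rightarrow(b)$, since $a(n)-a(m)\geq\lambda(n-m)-y$ holds for every $0\leq m\leq n$, I rewrite it as $a(n)+y\geq a(m)+\lambda(n-m)$ and take the supremum over $m$, obtaining $a(n)+y\geq\sup_{0\leq m\leq n}[a(m)+\lambda(n-m)]=a\bar{\otimes}\lambda(n)$, which is statement~(b). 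For $(b)\rightarrow(a)$, since the supremum dominates each term, $a(n)\geq a\bar{\otimes}\lambda(n)-y\geq a(m)+\lambda(n-m)-y$ for every $m$, which is statement~(a). This is the same manipulation as in Lemma~\ref{spacedomain}(1), with suprema and $\bar{\otimes}$ replacing infima and $\otimes$.

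For part~(2), I would first use part~(1) to replace the hypothesis by its pointwise form $a(n)-a(m)\geq\lambda(n-m)-y$. Fix $0\leq s\leq t$ and set $m=\mathcal{A}(s)$, $n=\mathcal{A}(t)$, so that $\mathcal{A}(s,t)=n-m$; the case $n=m$ is immediate since $\alpha,x\geq 0$, so assume $n\geq m+1$. The counting convention gives $a(m+1)>s$ and $a(n)\leq t$, whence $a(n)-a(m+1)<t-s$. Applying the pointwise bound to the indices $m+1\leq n$ yields $\lambda(n-m-1)\leq a(n)-a(m+1)+y<t-s+y$. Since $\alpha(\tau)=\sup\{k:\lambda(k)\leq\tau\}$, this forces $n-m-1\leq\alpha(t-s+y)$, i.e.
\begin{displaymath}
\mathcal{A}(s,t)=n-m\leq\alpha(t-s+y)+1.
\end{displaymath}

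Next I would absorb the shifted argument into the definition of $x$. Writing $\alpha(t-s+y)+1=\alpha(t-s)+[\alpha(t-s+y)-\alpha(t-s)+1]$ and bounding the bracket by its supremum $x=\sup_{\tau\geq 0}[\alpha(\tau+y)-\alpha(\tau)+1]$ (the choice $\tau=t-s$) gives $\mathcal{A}(s,t)\leq\alpha(t-s)+x$ for all $0\leq s\leq t$. Invoking the equivalence $(a)\Leftrightarrow(b)$ of Lemma~\ref{spacedomain}(1) for this fixed $x$ then yields $\mathcal{A}(t)\leq\mathcal{A}\otimes\alpha(t)+x$, as required. I should also check that $\alpha$ as defined lies in $\mathcal{G}$, which follows from $\lambda\in\mathcal{G}$ and the monotonicity of the set $\{k:\lambda(k)\leq\tau\}$ in $\tau$.

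The main obstacle I anticipate is the bookkeeping at the boundary. One must pair the \emph{first} packet arriving strictly after $s$ (index $m+1$, not $m$) with the \emph{last} packet arriving by $t$ (index $n$), and carry the strict inequality $a(m+1)>s$ together with $a(n)\leq t$ correctly through to $\lambda(n-m-1)<t-s+y$. This boundary shift is exactly what produces the additive $+1$ appearing inside the expression for $x$ in Eq.~(\ref{eq:xvalue}); choosing the indices incorrectly would either break the argument or yield a weaker constant than the one stated.
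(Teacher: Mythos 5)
Your proposal is correct and follows essentially the same route as the paper: part (1) by the direct sup/inf manipulation, and part (2) by setting $m=\mathcal{A}(s)$, $n=\mathcal{A}(t)$, applying the pointwise bound to the index pair $(m+1,n)$ to get $\mathcal{A}(s,t)\leq\alpha(t-s+y)+1$, and absorbing the shift into $x$. Your explicit handling of the $n=m$ case and the strict inequality $a(m+1)>s$ is slightly more careful bookkeeping than the paper's, but the argument is the same.
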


\begin{proof}
(1) The $(a)\rightarrow(b)$ part has been proved in Lemma \ref{spacedomain}(2). We only prove the $(b)\rightarrow(a)$ part. From the condition, we have 
\begin{displaymath}
a(n)-\sup_{0\leq m\leq n}\{a(m)+\lambda(n-m)\}+y\geq 0
\end{displaymath}
which implies 
\begin{displaymath}
\inf_{0\leq m \leq n}\{a(n)-a(m)-\lambda(n-m)\}+y\geq 0.
\end{displaymath}
Thus there holds $a(n)-a(m)\geq\lambda(n-m)-y$ for $\forall 0\leq m \leq n$ and $\forall y\geq 0$. 

(2) For $\forall 0\leq s\leq t$, we can find $m,n\geq 0$ according to the following functions
\begin{displaymath}
~~~~~~\mathcal{A}(t) = n =\sup\{k:a(k)\leq t\}
\end{displaymath}
\begin{displaymath}
~~~~~~\mathcal{A}(s) = m =\sup\{k:a(k)\leq s\}.
\end{displaymath} 
Thus, we have $\mathcal{A}(s,t)=n-m$ and $a(n)-a(m+1)\leq t-s$. From (1), we know that $a(n)\geq a\bar{\otimes}\lambda(n)-y$ is equivalent to $a(n)-a(m)\geq\lambda(n-m)-y$. Then we have
\begin{displaymath}
~~~~~~t-s \geq a(n)-a(m+1)\geq \lambda(n-m-1)-y.
\end{displaymath}
Taking the inverse function of $\lambda(n-m-1)$ yields
\begin{displaymath}
~~~~~~n-m-1 \leq \alpha(t-s+y)
\end{displaymath}
Because $\mathcal{A}(s,t)=n-m$, we have
\begin{displaymath}
\mathcal{A}(s,t)\leq \alpha(t-s+y)+1~~~~~~~~~~~~~~~~~~~~~~~~~~~~~~~~~~~~~~~~~~~~~
\end{displaymath}
\begin{displaymath}
~~~~~~= \alpha(t-s)+[\alpha(t-s+y)-\alpha(t-s)+1]
\end{displaymath}
\begin{displaymath}
~~~~~~\leq \alpha(t-s)+\sup_{t-s\geq 0}[\alpha(t-s+y)-\alpha(t-s)+1]
\end{displaymath}
Let $\tau=t-s$. The above inequality is written as
\begin{displaymath}
\mathcal{A}(s,t)\leq \alpha(t-s)+\sup_{\tau\geq 0}[\alpha(\tau+y)-\alpha(\tau)+1]
\end{displaymath}
\begin{displaymath}
~~~~~~~~= \alpha(t-s)+x
\end{displaymath}
Since $\mathcal{A}(s,t)-\alpha(t-s)-x\leq 0$ holds for $\forall 0\leq s\leq t$, we have
\begin{displaymath}
~~~~~~\sup_{0\leq s\leq t}[\mathcal{A}(s,t)-\alpha(t-s)-x]\leq 0
\end{displaymath}
from which we further obtain
\begin{displaymath}
~~~~~~\mathcal{A}(t)-\inf_{0\leq s\leq t}[\mathcal{A}(s)+\alpha(t-s)]-x\leq 0.
\end{displaymath}
We then conclude $\mathcal{A}(t)\leq\mathcal{A}\otimes\alpha(t)+x$.
\end{proof}

\subsection{Related Space-domain Results}
This sub-section reviews some related {\em space-domain} results under min-plus algebra \cite{Jiang:book}.~It is worth highlighting that the following results are for discrete time systems with unit discretization step. 

The \emph{virtual-backlog-centric} (v.b.c) stochastic arrival curve model \cite{Jiang:TrafficModel} is defined based on a probabilistic upper-bound on cumulative arrival. To ease later analysis, the definition of v.b.c stochastic arrival curve model presented in this paper is based on the number of cumulative arrival packets while not the amount of cumulative arrival (in bits) which has been widely used in the network calculus literature.  

The v.b.c stochastic arrival curve model explores the \emph{virtual backlog property} of deterministic arrival curve, which is that the queue length of a virtual single server queue (SSQ) fed with the same flow with a deterministic arrival curve is upper-bounded. 

For a flow having arrival curve $\alpha(t)$, we construct a virtual SSQ system fed with the same flow. The SSQ system has infinite buffer space and the buffer is initially empty. Suppose the virtual SSQ system provides service $\alpha(t)$ to the flow for all $t\geq 0$. Then the unfinished work or backlog in the virtual SSQ system by time $t$ is $\mathcal{B}(t)=\mathcal{A}(t)-\mathcal{A}^*(t)$. The Lindely equation can be used to derive $\mathcal{B}(t)$, which is 
\begin{equation}
\mathcal{B}(t) = \max\{0,\mathcal{B}(t-1)+\mathcal{A}(t-1,t)-\alpha(t-t+1)\}
\label{eq:LindelEq}
\end{equation}
Eq.(\ref{eq:LindelEq}) means that the amount of traffic backlogged in the system by time $t$ equals the amount of traffic backlogged by time $t-1$ plus the amount of traffic having arrived between $t-1$ and $t$ minus the amount of traffic having been served between $t-1$ and $t$. By applying Eq.(\ref{eq:LindelEq}) iteratively to its right-hand side, it becomes
\begin{equation}
\mathcal{B}(t) =\sup_{0\leq s\leq t}[A(s,t)-\alpha(t-s)].
\label{eq:backlog}
\end{equation}
 If the flow is constrained by arrival curve $\alpha(t)+x$ for all $t\geq 0$, it follows from Eq.(\ref{eq:backlog}) that the system backlog is also upper-bounded by $x$. The v.b.c stochastic arrival curve is defined based on the \emph{virtual backlog property}. 
\begin{definition}\label{vbcArrCurve}
\textbf{(v.b.c Stochastic Arrival Curve).}

A flow is said to have a virtual-backlog-centric (v.b.c) sto-\\chastic arrival curve $\alpha(t)\in\mathcal{G}$ with bounding function $f(x)\in\bar{\mathcal{G}}$, denoted by $\mathcal{A}\sim_{vb}\langle\alpha,f\rangle$, if for all $t\geq 0$ and all $x\geq 0$, there holds
\begin{equation}
P\big\{\sup_{0\leq s\leq t}[\mathcal{A}(s,t) - \alpha(t-s)] > x\big\}\leq f(x).
\label{eq:vbcArrCurve}
\end{equation}
\end{definition}
Eq.(\ref{eq:vbcArrCurve}) can also be written as follows:
\begin{equation}
P\big\{\mathcal{A}(t) > \mathcal{A}\otimes\alpha(t) + x\big\}\leq f(x).
\end{equation}

Based on the existing space-domain traffic and server models, a lot of results have been derived for stochastic network calculus which include the five basic properties \cite{Jiang:book} as introduced in Sec.~\ref{Sec-intro}.~In this paper, the following result is specifically made use of in later analysis and hence listed:
\begin{lemma}\label{superpositionMinPlus}
\textbf{(Superposition Property).} Consider $N$ flows with arrival processes $\mathcal{A}_i(t)$, i=1,...,N, respectively.~Let $\mathcal{A}(t)$ denote the aggregate arrival process.~If $\forall i$, $\mathcal{A}_i\backsim_{vb}\langle\alpha_i,f_i\rangle$, then $\mathcal{A}\backsim_{vb} \langle\alpha,f\rangle$ with $\alpha(t) = \sum_{i=1}^N\alpha_i(t)$, and $f(x) = f_1\otimes\cdot\cdot\cdot\otimes f_N(x)$.
\end{lemma}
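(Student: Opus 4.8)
The plan is to reduce the aggregate deviation appearing inside the probability to a sum of the individual per-flow deviations, and then to invoke Lemma~\ref{lemma1} to control the tail of that sum. Writing $\mathcal{A}(s,t)=\sum_{i=1}^N\mathcal{A}_i(s,t)$ and $\alpha(t-s)=\sum_{i=1}^N\alpha_i(t-s)$, the quantity governing the v.b.c curve of the aggregate is $\sup_{0\leq s\leq t}\big[\sum_{i=1}^N(\mathcal{A}_i(s,t)-\alpha_i(t-s))\big]$, so the task is to bound the probability that this exceeds $x$.

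First I would apply the subadditivity of the supremum: a single maximizing $s$ must be shared by all $N$ terms, whereas each term may instead be optimized over its own $s$, so there holds
\begin{equation}
\sup_{0\leq s\leq t}\Big[\sum_{i=1}^N(\mathcal{A}_i(s,t)-\alpha_i(t-s))\Big]\leq\sum_{i=1}^N\sup_{0\leq s\leq t}[\mathcal{A}_i(s,t)-\alpha_i(t-s)].
\end{equation}
Denoting $Z_i\equiv\sup_{0\leq s\leq t}[\mathcal{A}_i(s,t)-\alpha_i(t-s)]$ and $Z\equiv\sum_{i=1}^N Z_i$, this inequality yields the event inclusion $\{\sup_{0\leq s\leq t}[\mathcal{A}(s,t)-\alpha(t-s)]>x\}\subseteq\{Z>x\}$, hence $P\{\sup_{0\leq s\leq t}[\mathcal{A}(s,t)-\alpha(t-s)]>x\}\leq P\{Z>x\}=\bar{F}_Z(x)$.

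Next I would invoke Lemma~\ref{lemma1}, whose whole point is that its CCDF bound holds irrespective of any dependence among the summands; this is essential here because the constituent flows need not be independent. It gives $\bar{F}_Z(x)\leq\bar{F}_{Z_1}\otimes\cdots\otimes\bar{F}_{Z_N}(x)$. Finally, the hypothesis $\mathcal{A}_i\backsim_{vb}\langle\alpha_i,f_i\rangle$ means precisely $\bar{F}_{Z_i}(x)=P\{Z_i>x\}\leq f_i(x)$, and since min-plus convolution is monotone in each argument (replacing each function by a pointwise-larger one can only enlarge the quantity over which the infimum is taken), I would conclude $\bar{F}_{Z_1}\otimes\cdots\otimes\bar{F}_{Z_N}(x)\leq f_1\otimes\cdots\otimes f_N(x)=f(x)$. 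Chaining these three bounds establishes $\mathcal{A}\backsim_{vb}\langle\alpha,f\rangle$.

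The step I expect to carry the real weight is the subadditivity of the supremum, as it is what decouples the flows: the v.b.c definition bundles all flows under one common time argument $s$, and this inequality is exactly what frees each flow to be bounded on its own terms so that Lemma~\ref{lemma1} and the per-flow hypotheses become applicable. The remaining ingredients---the CCDF convolution bound and the monotonicity of $\otimes$---are routine once this decoupling is in place.
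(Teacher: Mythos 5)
Your proof is correct. Note that the paper itself does not prove Lemma~\ref{superpositionMinPlus}: it is quoted as a known space-domain result from the stochastic network calculus literature (cited to the referenced book), so there is no in-paper argument to compare against. Your derivation --- subadditivity of the supremum to decouple the flows, the dependence-free CCDF convolution bound of Lemma~\ref{lemma1}, and monotonicity of the min-plus convolution in each argument --- is exactly the standard proof of this property, and each step is sound; you have also correctly identified the sup-subadditivity step as the one that makes the per-flow hypotheses usable.
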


\section{Time-Domain Models}\label{Section-Model}
This section reviews the deterministic arrival curve and the deterministic service curve models defined in the time-domain. We generalize the deterministic models and define {\em time-domain} stochastic arrival curve and stochastic service curve models.
   
\subsection{Deterministic Arrival Curve}
Consider a flow of which packets arrive to a system at time $a(n)$. In order to deterministically guarantee a certain level of quality of service (QoS) to this flow, the traffic sent by this flow must be constrained. The deterministic network calculus traffic model in the time-domain characterizes packet inter-arrival time using a lower-bound function, called arrival curve in this paper and defined as follows \cite{Chang:MaxPlus}:
\begin{definition}\label{DetArriCurve}
\textbf{(Arrival Curve).} A flow is said to have a (deterministic) arrival curve $\lambda(n)\in\mathcal{G}$, if its arrival process $a(n)$ satisfies, for all $0\leq m\leq n$, 
\begin{equation}
~~~~~~~~a(n) - a(m) \geq \lambda(n-m).
\end{equation}
\end{definition}

The arrival curve model has the following triplicity principle which will be used as the basis in defining the stochastic arrival curve models in the subsequent subsections.  
\begin{lemma}\label{triplicity}
The following statements are equivalent:
\begin{enumerate}
\item $\forall 0\leq m\leq n$, $a(n)-a(m)\geq\big[\lambda(n-m)-x\big]^+$ for $\forall x\geq 0$;
\item $\forall$$n\geq 0$, $\sup_{0\leq m\leq n}\Big\{\big[\lambda(n-m)-x\big]^+ - [a(n) - a(m)]\Big\}\leq 0$ for $\forall x\geq 0$;
\item $\forall n\geq 0$, $\sup_{0\leq m\leq n}\sup_{0\leq q\leq m}\Big\{\big[\lambda(m-q)-x\big]^+ - [a(m) - a(q)]\Big\}\leq 0$ for $\forall x\geq 0$,
\end{enumerate}
where $\lambda\in\mathcal{G}$.
\end{lemma}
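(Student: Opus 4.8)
The plan is to establish the two equivalences $(1)\Leftrightarrow(2)$ and $(2)\Leftrightarrow(3)$ separately, from which the mutual equivalence of all three statements follows by transitivity. The entire argument rests on two elementary observations: that a family of real quantities is uniformly nonpositive if and only if its supremum is nonpositive, and that the universal quantifier ``$\forall n\geq 0$'' carried by statement $(2)$ permits applying $(2)$ at any smaller index.

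For $(1)\Leftrightarrow(2)$, I would fix $x\geq 0$ and $n\geq 0$ and rewrite the inequality in $(1)$ term by term. The requirement $a(n)-a(m)\geq\big[\lambda(n-m)-x\big]^+$ for every $m$ with $0\leq m\leq n$ is, for each such $m$, the same as $\big[\lambda(n-m)-x\big]^+-[a(n)-a(m)]\leq 0$. A set of reals is bounded above by $0$ exactly when its supremum is, so taking $\sup_{0\leq m\leq n}$ converts this into the inequality in $(2)$. Every step is an equivalence and $x,n$ are arbitrary, so $(1)$ and $(2)$ coincide.

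For $(2)\Leftrightarrow(3)$, the direction $(3)\Rightarrow(2)$ is immediate: in the outer supremum of $(3)$ one restricts attention to the single index $m=n$, whose contribution is $\sup_{0\leq q\leq n}\big\{\big[\lambda(n-q)-x\big]^+-[a(n)-a(q)]\big\}$, which is precisely the left-hand side of $(2)$ after renaming $q$ as $m$; since the full double supremum is $\leq 0$, this contribution is as well. The reverse direction $(2)\Rightarrow(3)$ is where the quantifier on $n$ does the work. Fixing $x\geq 0$ and $n\geq 0$, I would observe that for each $m$ in the range $0\leq m\leq n$ the inner supremum $\sup_{0\leq q\leq m}\big\{\big[\lambda(m-q)-x\big]^+-[a(m)-a(q)]\big\}$ is exactly the left-hand side of $(2)$ evaluated at the index $m$. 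Because $(2)$ is assumed for all nonnegative indices, it holds at this $m$, so the inner supremum is $\leq 0$ for every $m$; the outer supremum of quantities each $\leq 0$ is then also $\leq 0$, which is $(3)$.

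The main obstacle, modest as it is, is bookkeeping of the quantifiers rather than any analytic difficulty: one must recognize that $(2)$ is a family of statements indexed by $n$ and asserted for \emph{all} $n$, so that its validity at every intermediate index $m\leq n$ is exactly what collapses the nested supremum of $(3)$ onto the single supremum of $(2)$. The operator $[\,\cdot\,]^+$ is inert throughout and may be carried along unchanged.
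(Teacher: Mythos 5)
Your proposal is correct and follows essentially the same route as the paper: $(1)\Leftrightarrow(2)$ by the elementary fact that a finite family of reals is nonpositive iff its supremum is, $(3)\Rightarrow(2)$ by restricting the outer supremum to $m=n$, and $(2)\Rightarrow(3)$ by invoking $(2)$ at every intermediate index $m\leq n$ and taking the outer supremum. If anything, your handling is slightly tidier than the paper's, which silently drops the $[\,\cdot\,]^{+}$ in its displayed chains, whereas you correctly note it can be carried along unchanged.
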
 
\begin{proof}
It is trivially true that 
\begin{displaymath}
\lambda(n-m) - [a(n)-a(m)]\leq\sup_{0\leq m\leq n}\{\lambda(n-m)-[a(n)-a(m)]\}
\end{displaymath}
from which, (2) implies (1). In addition 
\begin{displaymath}
\sup_{0\leq m\leq n}\{\lambda(n-m) - [a(n) - a(m)]\}~~~~~~~
\end{displaymath}
\begin{displaymath}
\leq \sup_{0\leq m\leq n}\sup_{m\leq k\leq n}\big\{\lambda(k-m) - [a(k) - a(m)]\big\}
\end{displaymath}
\begin{displaymath}
= \sup_{0\leq k\leq n}\sup_{0\leq m\leq k}\big\{\lambda(k-m) - [a(k) - a(m)]\big\}
\end{displaymath}
\begin{displaymath}
= \sup_{0\leq m\leq n}\sup_{0\leq q\leq m}\big\{\lambda(m-q) - [a(m) - a(q)]\big\}
\end{displaymath}
with which, (3) implies (2).

For (1)$\rightarrow$(2), it holds since $a(n)-a(m)\geq\lambda(n-m)-x$ for $\forall 0\leq m\leq n$. For (2)$\rightarrow$(3), 
\begin{displaymath}
\sup_{0\leq m\leq n}\sup_{0\leq q\leq m}\{\lambda(m-q)-[a(m)-a(q)]\}\leq\sup_{0\leq m\leq n}[x] = x.
\end{displaymath} 
Thus (1), (2) and (3) are equivalent. 
\end{proof}
From Definition \ref{DetArriCurve}, the right-hand side of $a(n)-a(m)\geq\lambda(n-m)-x$ in Lemma \ref{triplicity}.(1) defines an arrival curve $\lambda(n-m)-x$. In addition, we can construct a virtual single server queue (SSQ) system that is initially empty, fed with the same traffic flow, and has a service curve $\lambda$ which makes $d(n)\leq a\bar{\otimes}\lambda(n)$ (see Definition \ref{DetServerModel}).~Then, the delay in the virtual SSQ system is upper-bounded by $d(n)-a(n)\leq\sup_{0\leq m \leq n}[\lambda(n-m)-(a(n)-a(m))]\leq x$, and the maximum system delay for the first $n$ packets is upper-bounded by 
\begin{displaymath}
\sup_{0\leq m \leq n}\{d(m)-a(m)\}
\end{displaymath}
\begin{displaymath}
\leq\sup_{0\leq m\leq n}\sup_{0\leq q\leq m}\{\lambda(m-q)-[a(m)-a(q)]\}\leq x.
\end{displaymath}

\textbf{Example 2}. The Generic Cell Rate Algorithm (GCRA) \cite{ITU:GCRA} with parameter $(T,\tau)$ is a parallel algorithm to the Leaky Bucket algorithm and has been used in fixed-length packet networks such as Asynchronous Transfer Mode (ATM) networks. The GCRA measures cell rate at a specified time scale and assumes that cells will have a minimum interval between them. Here, $T$ denotes the assumed minimum interval between cells and $\tau$ denotes the maximum acceptable excursion that quantifies how early cells may arrive with respect to $T$.~It can be verified that if a flow is GCRA$(T,\tau)$-constrained, it has an arrival curve
\begin{displaymath}
~~~~~~~~~~~~~~\lambda(n) = \big(T\cdot n - \tau\big)^+.
\end{displaymath}

\subsection{Inter-arrival-time Stochastic Arrival Curve}\label{Sec-itarrcur}
Lemma \ref{triplicity}.(1) defines a deterministic arrival curve $\lambda(n)-x$ which lower-bounds the inter-arrival time between any two packets.~Based on this, we define its probabilistic counterpart as follows:
\begin{definition}\label{itarrcurve}
\textbf{(i.a.t Stochastic Arrival Curve).} A flow is said to have an inter-arrival-time (i.a.t) stochastic arrival curve $\lambda\in\mathcal{G}$ with bounding function $h\in\bar{\mathcal{G}}$, denoted by $a(n)\sim_{it}\langle\lambda,h\rangle$, if for all $0\leq m\leq n$ and all $x\geq 0$, there holds
\begin{equation}
P\Big\{\lambda(n-m) - [a(n) - a(m)] > x\Big\}\leq h(x).
\end{equation}
\end{definition}

\textbf{Example 3.} Consider a flow with fixed unit packet size. Suppose its packet inter-arrival times follow an exponential distribution with mean $\frac{1}{\rho}$. Then, the packet arrival time has an Erlang distribution with parameter $(n,\rho)$ \cite{handbook:erlangdist}. And, for any two packets $p^m$ and $p^n$, their inter-arrival time $a(n)-a(m)$ satisfies, for $\forall x\geq 0$,
\begin{displaymath}
P\Big\{\frac{1}{\rho}(n-m)-[a(n)-a(m)] > x\Big\}~~~~~~~~~~~~~~~
\end{displaymath}   
\begin{displaymath}
\leq 1-\sum_{k=0}^{n-m-1}\frac{e^{-\rho y}(\rho y)^k}{k!}-\rho\frac{e^{-\rho y}(\rho y)^{n-m-1}}{(n-m-1)!} 
\end{displaymath} 
where $y=\frac{1}{\rho}(n-m)-x$. 

The i.a.t stochastic arrival curve is intuitively simple, but it has limited use if no additional constraint is enforced. Let us consider a simple example to understand this problem. Consider a single node with constant per packet service time $T$ and its input flow $F$ satisfying $a(n)\sim_{it}\langle\tau\cdot n,h\rangle$ where $\tau\geq T$. Suppose we are interested in the delay $D(n)$, where, by definition, $D(n)=d(n)-a(n)$. Because the node has constant per packet service time $T$, it has a (deterministic) service curve $T\cdot n$ which implies $d(n)=\sup_{0\leq m\leq n}[a(m)+T\cdot(n-m)]$. Then we have
\begin{displaymath}
D(n)=\sup_{0\leq m\leq n}\big\{a(m)+T\cdot(n-m)\big\}-a(n)
\end{displaymath}
\begin{displaymath}
~~~~~~= \sup_{0\leq m\leq n}\big\{a(m)+T\cdot(n-m)-a(n)\big\}
\end{displaymath}
\begin{equation}
~~~~~~\leq \sup_{0\leq m\leq n}\big\{\tau\cdot(n-m)-[a(n)-a(m)]\big\}
\label{eq:difficulty2}
\end{equation}
From Eq.(\ref{eq:difficulty2}), we have difficulty in further deriving more results if no additional constraint is added because we only know $P\{\tau\cdot(n-m)-[a(n)-a(m)]>x\}\leq h(x)$. When investigating the performance metrics such as delay bound and backlog bound in Section \ref{Sec-serviceguarantee}, we meet the similar difficulty.

\subsection{Virtual-system-delay Stochastic Arrival Curve}
The previous subsection stated the difficulty of applying i.a.t stochastic arrival curve to service guarantee analysis. This subsection introduces another stochastic arrival curve model that can help avoid such difficulty. This model is called \emph{virtual-system-delay} ($v.s.d$) stochastic arrival curve. The model explores the \emph{virtual system delay property} of deterministic arrival curve as implied by Lemma \ref{triplicity}.(2), which is that the amount of time a packet spends in a virtual SSQ fed with the same flow with a deterministic arrival curve is lower-bounded.  

For a flow having deterministic arrival curve, we construct a virtual SSQ system fed with the flow, which has infinite buffer space and the buffer is initially empty. Suppose the virtual SSQ system provides a deterministic service curve $\lambda$ to the flow or $d(n) = a\bar{\otimes}\lambda(n)$ for all $n\geq 0$. The amount of time packet $n$ spends in the virtual SSQ system is $W_s(n) = d(n) - a(n)$ = $\sup_{0\leq m\leq n}\{\lambda(n-m) - [a(n)-a(m)]\}$. If the flow is constrained by arrival curve $\lambda(n)-x$ for all $n\geq 0$, $W_s$ is also lower-bounded by $x$. 

Based on the virtual system time property, we define virtual-system-delay (v.s.d) stochastic arrival curve to characterize the arrival process.
\begin{definition}\label{vstarrcurve}
\textbf{(v.s.d Stochastic Arrival Curve).} A flow is said to have a virtual-system-delay (v.s.d) stochastic arrival curve $\lambda\in\mathcal{G}$ with bounding function $h\in\bar{\mathcal{G}}$, denoted by $a(n)\sim_{vd}\langle\lambda,h\rangle$, if for all $0\leq m\leq n$ and all $x\geq 0$, there holds
\begin{equation}
P\Big\{\sup_{0\leq m\leq n}\big\{\lambda(n-m) - [a(n) - a(m)]\big\} > x\Big\}\leq h(x).
\label{eq:vsdArrCurve}
\end{equation}
\end{definition}
Eq.(\ref{eq:vsdArrCurve}) can also be written as
\begin{equation}
~~~~~~~~~~P\big\{a\bar{\otimes}\lambda(n) - a(n) > x\big\}\leq h(x).
\label{eq:vsdArrCurve1}
\end{equation}
$a\bar{\otimes}\lambda(n)$ can be considered as the expected time that the packet would arrive to the system if the flow had passed through the virtual SSQ with service curve $\lambda(n)$. $x$ denotes the difference between the expected arrival time and the actual arrival time. Eq.(\ref{eq:vsdArrCurve1}) characterizes this difference $x$ by introducing a bounding function $h(x)$.

\textbf{Example 4.} Consider a flow with the same fixed packet size. Suppose all packet inter-arrival times are exponentially distributed with mean 
$\frac{1}{\mu}$. Based on the steady-state probability mass function (PMF) of the queue-waiting time for an M/D/1 queue \cite{Shortle:MD1queue}, we say that the flow has a v.s.d stochastic arrival curve $a(n)\sim_{vd}\langle D\cdot n,h^{exp}\rangle$ for $\forall D<\frac{1}{\mu}$, with $\rho=\mu\cdot D$ and  
\begin{displaymath}
h^{exp}(x) = 1 - (1-\rho)\sum_{i=0}^{\lfloor x/D\rfloor+1}e^{-\mu(-x)}\frac{[\mu(-x)]^i}{i!}
\end{displaymath}
where, $\lfloor x/D\rfloor$ denotes the greatest integer less than or equal to $x/D$. 

The following theorem establishes relationships between i.a.t stochastic arrival curve and v.s.d stochastic arrival curve. 
\begin{theorem}\label{it2vsdrelation}
\begin{enumerate}
\item If a flow has a v.s.d stochastic arrival curve $\lambda\in\mathcal{G}$ with bounding function $h\in\bar{\mathcal{G}}$, then the flow has an i.a.t stochastic arrival curve $\lambda\in\mathcal{G}$ with the same bounding function $h\in\bar{\mathcal{G}}$.
\item Inversely, if a flow has an i.a.t stochastic arrival curve $\lambda\in\mathcal{G}$ with bounding function $h\in\bar{\mathcal{F}}$, it also has a v.s.d stochastic arrival curve $\lambda_{-\eta}\in\mathcal{G}$ with bounding function $h^{\eta}\in\bar{\mathcal{G}}$ where
\begin{displaymath}
\lambda_{-\eta}(n) = \lambda(n) - \eta\cdot n~~~~~~~~~~~~
\end{displaymath} 
\begin{displaymath}
h^{\eta}(x) = \Big[h(x)+\frac{1}{\eta}\int_{x}^{\infty}h(y)dy\Big]_1
\end{displaymath}
for $\forall\eta > 0$. 
\end{enumerate}
\end{theorem}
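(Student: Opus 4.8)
The plan is to treat the two directions separately, as they differ greatly in difficulty. Part (1) is essentially immediate from the definitions. Fix any $0\leq m\leq n$ and any $x\geq 0$. Since the single term $\lambda(n-m)-[a(n)-a(m)]$ never exceeds its own supremum over the index, the event $\{\lambda(n-m)-[a(n)-a(m)]>x\}$ is contained in the event $\{\sup_{0\leq m'\leq n}\{\lambda(n-m')-[a(n)-a(m')]\}>x\}$. Monotonicity of probability then yields $P\{\lambda(n-m)-[a(n)-a(m)]>x\}\leq h(x)$ directly from the v.s.d hypothesis (Definition \ref{vstarrcurve}), which is exactly the i.a.t condition of Definition \ref{itarrcurve}. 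No structural assumption on $h$ beyond $h\in\bar{\mathcal{G}}$ is needed here.

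Part (2) is the substantive direction, and it is precisely where the difficulty flagged around Eq.(\ref{eq:difficulty2}) must be overcome: the i.a.t curve controls each term individually, whereas the v.s.d curve requires controlling the supremum. First I would substitute $\lambda_{-\eta}(n-m)=\lambda(n-m)-\eta(n-m)$ and change the index to $k=n-m$, writing the target event as $\{\sup_{0\leq k\leq n}\{\lambda(k)-\eta k-[a(n)-a(n-k)]\}>x\}$. Because the supremum is over a finite index set, this event is the union $\bigcup_{k=0}^{n}\{\lambda(k)-[a(n)-a(n-k)]>x+\eta k\}$, so a union bound gives $P\{\cdots\}\leq\sum_{k=0}^{n}P\{\lambda(k)-[a(n)-a(n-k)]>x+\eta k\}$. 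Applying the i.a.t hypothesis to the packet pair $(n-k,n)$ with the shifted threshold $x+\eta k$ bounds the $k$-th term by $h(x+\eta k)$, yielding $\sum_{k=0}^{n}h(x+\eta k)\leq\sum_{k=0}^{\infty}h(x+\eta k)$.

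The remaining work is to show this series is dominated by $h^{\eta}(x)$. Since $h\in\bar{\mathcal{F}}\subseteq\bar{\mathcal{G}}$ is wide-sense decreasing, on each interval $[x+\eta(k-1),x+\eta k]$ of length $\eta$ the value $h(x+\eta k)$ is the minimum, so $\eta\,h(x+\eta k)\leq\int_{x+\eta(k-1)}^{x+\eta k}h(y)\,dy$. Summing over $k\geq 1$, the intervals $[x+\eta(k-1),x+\eta k]$ partition $[x,\infty)$, so the sum of the integrals equals $\int_{x}^{\infty}h(y)\,dy$, which is finite precisely because $h\in\bar{\mathcal{F}}$. Separating the $k=0$ term gives $\sum_{k=0}^{\infty}h(x+\eta k)\leq h(x)+\frac{1}{\eta}\int_{x}^{\infty}h(y)\,dy$, and capping by $1$ (a probability never exceeds one) produces exactly $h^{\eta}(x)=\big[h(x)+\frac{1}{\eta}\int_{x}^{\infty}h(y)\,dy\big]_1$.

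The main obstacle is the passage from per-term control to supremum control, and the key device is the linear discount $-\eta n$: it converts the fixed threshold $x$ into the growing threshold $x+\eta k$, which is what makes the union bound summable and turns the sum into the convergent integral $\frac{1}{\eta}\int_{x}^{\infty}h(y)\,dy$. I would also verify the routine facts that $h^{\eta}\in\bar{\mathcal{G}}$ (it is a sum of non-negative decreasing functions, then capped at $1$, hence non-negative and wide-sense decreasing) and that $\lambda_{-\eta}$ remains in $\mathcal{G}$ over the range of interest, so that the conclusion is a bona fide v.s.d stochastic arrival curve in the sense of Definition \ref{vstarrcurve}.
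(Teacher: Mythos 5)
Your proposal is correct and follows essentially the same route as the paper's own proof: part (1) by monotonicity of probability, and part (2) by a union bound over the finite index set, applying the i.a.t bound at the shifted threshold $x+\eta k$, and dominating $\sum_{k\geq 1}h(x+\eta k)$ by $\frac{1}{\eta}\int_{x}^{\infty}h(y)\,dy$ via monotonicity of $h$ before capping at $1$. You in fact spell out the sum-to-integral comparison more explicitly than the paper does.
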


\begin{proof}
The first part follows from that 
\begin{displaymath}
\lambda(n-m)-[a(n)-a(m)]\leq\sup_{0\leq m\leq n}\{\lambda(n-m)-[a(n)-a(m)]\}
\end{displaymath}
holds for $\forall 0\leq m \leq n$. 
For the second part, there holds
\begin{displaymath}
\sup_{0\leq m\leq n}\big\{\lambda_{-\eta}(n-m)-[a(n)-a(m)]\big\}~~~~~~~~
\end{displaymath}
\begin{displaymath}
~~\leq_{st} \sup_{0\leq m\leq n}\big\{\lambda_{-\eta}(n-m)-[a(n)-a(m)]\big\}^+
\end{displaymath}
Since for $\forall x\geq 0$, 
\begin{displaymath}
P\big\{\{\lambda(n-m)-\eta\cdot(n-m)-[a(n)-a(m)]\}^+>x\big\}
\end{displaymath}
\begin{displaymath}
=P\big\{\{\lambda(n-m)-\eta\cdot(n-m)-[a(n)-a(m)]\}>x\big\}
\end{displaymath}
\begin{displaymath}
\leq h\big(x+\eta\cdot(n-m)\big),
\end{displaymath}
 we have
\begin{displaymath}
P\Big\{\sup_{0\leq m\leq n}\{\lambda_{-\eta}(n-m)-[a(n)-a(m)]\}> x\Big\}
\end{displaymath}
\begin{displaymath}
\leq \sum_{m=0}^nP\Big\{\{\lambda_{-\eta}(n-m)-[a(n)-a(m)]\}^+> x\Big\}
\end{displaymath}
\begin{displaymath}
\leq \sum_{m=0}^nh(x+\eta\cdot(n-m))=\sum_{k=0}^nh(x+\eta\cdot k)~~~~~~~
\end{displaymath}
\begin{displaymath}
\leq \sum_{k=0}^{\infty}h(x+\eta\cdot k)=h(x)+\sum_{k=1}^{\infty}h(x+\eta\cdot k)~~~~~~~
\end{displaymath}
\begin{equation}
\leq h(x)+\frac{1}{\eta}\int_{x}^{\infty}h(y)dy.~~~~~~~~~~~~~~~~~~~~~~~~~~~
\label{eq:intermediatederivation}
\end{equation}
which is meaningful only when Eq.(\ref{eq:intermediatederivation}) is upper-bounded by one. The 1-fold integration of $h(x)$ is bounded by one because the condition assumes $h\in\bar{\mathcal{F}}$ as for the \cite{David:SBB}. Then the second part follows from Eq.(\ref{eq:intermediatederivation}). 
\end{proof}

Note that in the second part of the above theorem, $h(x)\in\bar{\mathcal{F}}$ while not $\in\bar{\mathcal{G}}$. If the requirement on the bounding function is relaxed to $h(x)\in\bar{\mathcal{G}}$, the above relationship may not hold in general.

The v.s.d stochastic arrival curve has a counterpart defined in the space-domain, the v.b.c stochastic arrival curve as defined in Definition \ref{vbcArrCurve}.~The following theorem establishes relationships between these two models.  
\begin{theorem}\label{vbctovsd}
\begin{enumerate}
\item If a flow has a v.b.c stochastic arrival curve $\alpha(t)\in\mathcal{G}$ with bounding function $f(x)\in\bar{\mathcal{G}}$, the flow has a v.s.d stochastic arrival curve $\lambda(n)\in\mathcal{G}$ with bounding function $h(y)\in\bar{\mathcal{G}}$, where $\lambda(n)=\inf\{\tau:\alpha(\tau)\geq n\}$ and $h(y) = f\big(\sup_{\tau\geq 0}[\alpha(\tau+y)-\alpha(\tau)+1]\big)$. 
\item If a flow has a v.s.d stochastic arrival curve $\lambda(n)\in\mathcal{G}$ with bounding function $h(y)\in\bar{\mathcal{G}}$, the flow has a v.b.c stochastic arrival curve $\alpha(t)\in\mathcal{G}$ with bounding function $f(x)\in\bar{\mathcal{G}}$, where $\alpha(t)=\sup\{k:\lambda(k)\leq t\}$ and $f(x) = h\big(\sup_{k\geq 0}[\lambda(k)-\lambda(k-x)]\big)$. 
\end{enumerate}
\end{theorem}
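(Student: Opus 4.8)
The plan is to read both statements as tail bounds on two virtual-queue observables and to transport these tails across the deterministic inter-conversions of Lemmas~\ref{spacedomain} and~\ref{timedomainlemma}. Write the virtual backlog as $V(t)\equiv\sup_{0\le s\le t}[\mathcal{A}(s,t)-\alpha(t-s)]$ and the virtual delay as $U(n)\equiv\sup_{0\le m\le n}\{\lambda(n-m)-[a(n)-a(m)]\}$. Then Definition~\ref{vbcArrCurve} is exactly ``$P\{V(t)>x\}\le f(x)$ for all $t$'', and Definition~\ref{vstarrcurve}, in the form~(\ref{eq:vsdArrCurve1}), is exactly ``$P\{U(n)>y\}\le h(y)$ for all $n$''. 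The two threshold maps appearing inside $f$ and $h$ in the statement are precisely the ones already isolated deterministically: a backlog level $x$ produces the delay level $y(x)=\sup_{k\ge0}[\lambda(k)-\lambda(k-x)]$ of~(\ref{eq:yinversefunction}), and a delay level $y$ produces the backlog level $x(y)=\sup_{\tau\ge0}[\alpha(\tau+y)-\alpha(\tau)+1]$ of~(\ref{eq:xvalue}). The whole proof is therefore a matter of establishing a sample-path inclusion between a backlog-overflow event and a delay-overflow event at matched thresholds, and then carrying the probability bound across it.

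For part~(2) I would argue directly, paralleling Lemma~\ref{spacedomain}(2). On any sample path on which the backlog at time $t$ exceeds $x$, FIFO service in the virtual single-server queue forces some packet to see a virtual delay exceeding the time needed to clear $x$ units of work, i.e. exceeding $y(x)$; this gives the containment $\{V(t)>x\}\subseteq\{\,U(n)>y(x)\text{ for some }n\,\}$. Applying the given v.s.d tail $h$ at level $y(x)$ then yields $P\{V(t)>x\}\le h(y(x))=f(x)$, which is the claim. Part~(1) is the mirror image, paralleling Lemma~\ref{timedomainlemma}(2): on any path on which packet $n$ sees a virtual delay exceeding $y$, the corresponding arrival burst overflows the virtual buffer by more than $x(y)$ units, giving $\{U(n)>y\}\subseteq\{\,V(t)>x(y)\text{ for some }t\,\}$, after which the v.b.c tail $f$ at level $x(y)$ gives $P\{U(n)>y\}\le f(x(y))=h(y)$. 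The role of the suprema in $y(x)$ and $x(y)$ is to make each inclusion hold uniformly, i.e. with a single deterministic threshold valid whatever the realized arrival pattern.

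Two points are where I expect the real work to sit. The first, and in my view the main obstacle, is that the natural pairing behind Lemmas~\ref{spacedomain}--\ref{timedomainlemma} sends a packet index $n$ to its random arrival epoch $a(n)$ and a time $t$ to the random count $\mathcal{A}(t)$; consequently the ``for some $t$'' (resp.\ ``for some $n$'') witness produced by the inclusion lives at a data-dependent point, whereas the hypothesis only supplies a tail bound at each fixed point. I would handle this by showing that the single witnessing point already carries the bound --- which is what the uniform, sup-based thresholds $y(x)$ and $x(y)$ are designed to guarantee --- rather than attempting a union bound over points, which would be lossy; if a purely pointwise argument resists, the fallback is to read Definitions~\ref{vbcArrCurve} and~\ref{vstarrcurve} in their running-maximum form, for which the inclusion transfers verbatim.

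The second delicate point is the discretization bookkeeping, visible as the ``$+1$'' that appears in $x(y)$ but not in $y(x)$. Because the conversions rest on the index/time correspondences $\mathcal{A}(t)=\sup\{k:a(k)\le t\}$ and $a_{+}(n)=a(n)+\epsilon$, a single boundary packet is gained or lost depending on how interval endpoints are counted, and tracking this off-by-one through the strict inequalities in the two definitions is what pins down the exact argument of $f$ and $h$. I would therefore verify both inclusions on the integer packet counts directly, as in the proofs of Lemmas~\ref{spacedomain} and~\ref{timedomainlemma}, so that the boundary packet is accounted for explicitly, rather than inferring the constant from a continuous-rate heuristic, since that is exactly where the ``$+1$'' is confirmed. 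The monotonicity of $f,h\in\bar{\mathcal{G}}$ then ensures that replacing a realized threshold by its uniform bound only weakens each inequality in the safe direction.
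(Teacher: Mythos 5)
Your proposal follows essentially the same route as the paper: both parts are obtained by reading the two definitions as tail bounds on the virtual-backlog observable $\sup_{0\le s\le t}[\mathcal{A}(s,t)-\alpha(t-s)]$ and the virtual-delay observable $a\bar{\otimes}\lambda(n)-a(n)$, invoking the deterministic sample-path implications of Lemmas~\ref{spacedomain}(2) and~\ref{timedomainlemma}(2) at the matched thresholds $y(x)=\sup_{k\ge0}[\lambda(k)-\lambda(k-x)]$ and $x(y)=\sup_{\tau\ge0}[\alpha(\tau+y)-\alpha(\tau)+1]$, and transferring the probability bound across the resulting event inclusions. The random-witness-point subtlety you flag (the witnessing $t$ or $n$ being data-dependent) is real, but the paper's own proof passes over it silently, so your treatment is, if anything, slightly more explicit on that point.
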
 

\begin{proof}
(1) From Lemma \ref{spacedomain}, we know that for $\forall x,t\geq 0$, event $\{\mathcal{A}(t)\leq\mathcal{A}\otimes\alpha(t)+x\}$ implies event $\{a(n)\geq a\bar{\otimes}\lambda(n)-y\}$ where $y$ is obtained from Eq.(\ref{eq:yinversefunction}). Thus, there holds 
\begin{displaymath}
P\{\mathcal{A}(t)\leq\mathcal{A}\otimes\alpha(t)+x\}\leq P\{a(n)\geq a\bar{\otimes}\lambda(n)-y\}.
\end{displaymath}
We further have
\begin{displaymath}
P\{\mathcal{A}(t)>\mathcal{A}\otimes\alpha(t)+x\}\geq P\{a(n)<a\bar{\otimes}\lambda(n)-y\}.
\end{displaymath} 
From the condition that the flow has a v.b.c stochastic arrival curve $\alpha(t)$, we know $P\{\mathcal{A}(t)>\mathcal{A}\otimes\alpha(t)+x\}\leq f(x)$. According to Eq.(\ref{eq:xvalue}), we obtain
\begin{displaymath}
P\{a(n)<a\bar{\otimes}\lambda(n)-y\}\leq f\big(\sup_{\tau\geq 0}[\alpha(\tau+y)-\alpha(\tau)+1]\big).
\end{displaymath}

(2) From Lemma \ref{timedomainlemma}, we know that for $\forall y\geq 0$, event $\{a(n)\geq a\bar{\otimes}\lambda(n)-y\}$ implies event $\{\mathcal{A}(t)\leq\mathcal{A}\otimes\alpha(t)+x\}$ where $x$ is obtained from Eq.(\ref{eq:xvalue}). Thus, there holds
\begin{displaymath}
P\{a(n)\geq a\bar{\otimes}\lambda(n)-y\}\leq P\{\mathcal{A}(t)\leq\mathcal{A}\otimes\alpha(t)+x\}
\end{displaymath}
We further have 
\begin{displaymath}
P\{a(n)< a\bar{\otimes}\lambda(n)-y\}\geq P\{\mathcal{A}(t)>\mathcal{A}\otimes\alpha(t)+x\}
\end{displaymath} 
From the condition that the flow has a v.s.d stochastic arrival curve $\lambda(n)$, we know $P\{a(n)< a\bar{\otimes}\lambda(n)-y\}\leq h(y)$. According to Eq.(\ref{eq:yinversefunction}), we have
\begin{displaymath}
P\{\mathcal{A}(t)>\mathcal{A}\otimes\alpha(t)+x\}\leq h\big(\sup_{k\geq 0}[\lambda(k)-\lambda(k-x)]\big)
\end{displaymath} 
and complete the proof.
\end{proof}

\subsection{Maximum-(virtual)-system-delay Stochastic Arrival Curve}
The maximum-(virtual)-system-delay (m.s.d) stochastic arrival curve explores the \emph{maximum virtual system delay property} of deterministic arrival curve implied by Lemma \ref{triplicity}.(3), which is that the maximum system delay of a virtual SSQ fed with the same flow with a deterministic arrival curve is lower-bounded. 

Similar to the discussion for v.s.d stochastic arrival curve, for a flow having arrival curve, we construct a virtual SSQ system fed with the flow, which has infinite buffer space and the buffer is initially empty. Suppose the virtual SSQ system provides a deterministic service curve $\lambda$ to the flow or $d(n)=a\bar{\otimes}\lambda(n)$ for all $n\geq 0$. The maximum system delay in the virtual SSQ system for the first $n$ arrival packets as $\sup_{0\leq m\leq n}W_s(m)=\sup_{0\leq m\leq n}\sup_{0\leq q\leq m}\{\lambda(m-q)-[a(m)-a(q)]\}$. If the flow is constrained by arrival curve $\lambda(n)-x$ for all $n\geq 0$, the maximum system delay in the virtual SSQ is also upper-bounded by $x$.  

Based on the maximum virtual system delay property, we define m.s.d stochastic arrival curve model.
\begin{definition}\label{mvstarrcurve}
\textbf{(m.s.d Stochastic Arrival Curve).}

A flow is said to have a maximum-(virtual)-system-delay (m.s.d) stochastic arrival curve $\lambda(n)\in\mathcal{G}$ with bounding function $h(x)\in\bar{\mathcal{G}}$, denoted by $a(n)\sim_{md}\langle\lambda,h\rangle$, if for all $0\leq m\leq n$ and all $x\geq 0$, there holds
\begin{equation}
P\Big\{\sup_{0\leq m\leq n}\sup_{0\leq q\leq m}\big\{\lambda(m-q) - [a(m) - a(q)]\big\} > x\Big\}\leq h(x).
\end{equation}
\end{definition}

\subsection{Deterministic Service Curve}
To provide service guarantees to an arrival-constrained flow $F$,~the system usually needs to allocate a minimum service rate to $F$.~A guaranteed minimum service rate is equivalent to a guaranteed maximum service time for each packet of the flow, and accordingly the packet's departure time from the system is bounded.~Because packets of the same flow are served in FIFO manner, any packet $p^n$ from this flow will depart by $\hat{d}(n)$ which is iteratively defined by  
\begin{equation}
\hat{d}(n) = \max[a(n),\hat{d}(n-1)] + \delta(n)
\label{eq:departuretime}
\end{equation}
with $\hat{d}(0)=0$, where $\delta(n)$ is the service time guaranteed to $p^n$. By applying Eq.(\ref{eq:departuretime}) iteratively to its right-hand side, it becomes
\begin{equation}
~~~~\hat{d}(n) = \sup_{0\leq m \leq n}[a(m) + \sum_{i=m}^n\delta(i)] 
\label{eq:departuretimeori}
\end{equation}
where $\sum_{i=m}^n\delta(i)$ is the guaranteed cumulative service time for packet $p^m$ to $p^n$. Suppose we can use a function $\gamma(n-m)$ to denote $\sum_{i=m}^n\delta(i)$, i.e. $\gamma(n-m)= \sum_{i=m}^n\delta(i)$. Then, Equation(\ref{eq:departuretimeori}) becomes
$$
\hat{d}(n) = \sup_{0\leq m \leq n}[a(m) + \gamma(n-m)] = a\bar{\otimes}\gamma(n)
$$
which provides a basis for the following {\em time-domain} (deterministic) server model that charaterizes the service using an upper bound on the cumulative packet service time \cite{Chang:MaxPlus}:

\begin{definition}\label{DetServerModel}
\textbf{(Service Curve).} Consider a system $\mathcal{S}$ with input process $a(n)$ and output process $d(n)$. The system is said to provide to the input a (deterministic) service curve $\gamma(n)\in\mathcal{G}$, if for $\forall n\geq 0$, 
\begin{equation}
~~~~~~~d(n)\leq a\bar{\otimes}\gamma(n).
\label{eq:defnideptime}
\end{equation}
\end{definition}

The (deterministic) service curve model has the following duality principle: 
\begin{lemma}\label{duality}
For $\forall x\geq 0$, $d(n)-a\bar{\otimes}\gamma(n)\leq x$ for all $n\geq 0$, if and only if $\sup_{0\leq m\leq n}[d(n)-a\bar{\otimes}\gamma(n)]\leq x$ for $\forall n\geq 0$, where $\gamma\in\mathcal{G}$.
\end{lemma}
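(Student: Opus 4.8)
The plan is to establish the two implications of the equivalence separately, relying only on elementary properties of the supremum together with the fact that the index $m=n$ always lies in the range $0\leq m\leq n$ over which the running maximum is formed. Read with the supremand depending on the running index, the right-hand condition reads $\sup_{0\leq m\leq n}[d(m)-a\bar{\otimes}\gamma(m)]\leq x$, i.e.\ the running maximum of the per-packet quantity $d(m)-a\bar{\otimes}\gamma(m)$ is bounded by $x$; this is the service-curve analogue of the (1)$\leftrightarrow$(2) equivalence of the triplicity principle in Lemma~\ref{triplicity}.

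For the \emph{only if} direction I would fix an arbitrary $n\geq0$ and invoke the hypothesis $d(k)-a\bar{\otimes}\gamma(k)\leq x$ at each index $k=m$ with $0\leq m\leq n$, giving $d(m)-a\bar{\otimes}\gamma(m)\leq x$ for every such $m$; taking the supremum over $m\in[0,n]$ then yields $\sup_{0\leq m\leq n}[d(m)-a\bar{\otimes}\gamma(m)]\leq x$. Since $n$ is arbitrary, the bound holds for all $n\geq0$. For the \emph{if} direction I would again fix $n\geq0$ and isolate the single term at $m=n$ inside the running maximum: because $m=n$ is an admissible index, $d(n)-a\bar{\otimes}\gamma(n)\leq\sup_{0\leq m\leq n}[d(m)-a\bar{\otimes}\gamma(m)]\leq x$, the last step being exactly the hypothesis at $n$.

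I do not expect a substantive obstacle, as the statement is a direct consequence of the definition of the supremum. The only point that warrants care is the bookkeeping of the universal quantifier over $n$ present on both sides: each direction must be argued for every $n$ separately, and the reverse implication crucially uses that the running maximum at index $n$ already contains the per-packet term $d(n)-a\bar{\otimes}\gamma(n)$ of interest, so that boundedness of the maximum immediately forces boundedness of this single term.
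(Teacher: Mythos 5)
Your proof is correct and follows essentially the same route as the paper's: the \emph{only if} direction bounds each term of the running maximum by $x$ and takes the supremum, while the \emph{if} direction uses that the term at $m=n$ is dominated by the supremum. You also correctly read the supremand as $d(m)-a\bar{\otimes}\gamma(m)$ (the running index), which matches the paper's intent.
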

\begin{proof}
For the "if" part, it holds because $d(n)-a\bar{\otimes}\gamma(n)\leq\sup_{0\leq m\leq n}[d(n)-a\bar{\otimes}\gamma(n)]$. For the "only if" part, from $d(n)-a\bar{\otimes}\gamma(n)\leq x$ for $\forall n\geq 0$, we have $\sup_{0\leq m\leq n}[d(n)-a\bar{\otimes}\gamma(n)]\leq\sup_{0\leq m\leq n}[x]=x$.
\end{proof}
By the definition of service curve, the first part of Lemma \ref{duality} defines a service curve $\gamma(n)+x$. Lemma \ref{duality} states that if a server provides service curve $\gamma(n)+x$, then $\sup_{0\leq m\leq n}[d(m)-a\bar{\otimes}\gamma(m)]\leq x$ holds, and vice versa.~In this sense, we call Lemma \ref{duality} the \emph{duality principle} of service curve. 

\subsection{Stochastic Service Curve}
For networks providing stochastic service guarantees, following the principle of Eq.(\ref{eq:departuretimeori}), we have the following expression for the expected departure time of packet $p^n$
\begin{displaymath}
\hat{d}(n) = \sup_{0\leq m\leq n}[a(m)+\sum_{i=m}^n(\delta(i)+\epsilon(i))]
\end{displaymath}
where we assume $\delta(i)$ is the deterministic part while $\epsilon(i)$ the random part in the total service time $\delta(i)+\epsilon(i)$ guaranteed to packet $p^{i}$. We call $\epsilon(i)$ \emph{stochastic error term} associated to $\delta(i)$. Here, $\epsilon(n)$ is introduced to represent the additional delay of $p^n$ due to some randomness. For example, an error-prone wireless link is often considered to operate in two states. If the link is in \lq good\rq~condition, it can send and receive data correctly; if the link is in \lq bad\rq~condition due to errors, the data that should be sent immediately has to be queued longer until the channel changes to \lq good\rq~condition. Then, $\epsilon(n)$ in this case represents the time period in which the channel is in \lq bad\rq~condition between the time when $p^{n-1}$ has been sent correctly and the time when $p^n$ can be sent.

With the consideration of the stochastic error term, the (deterministic) service curve can be extended to a stochastic version as follows: 
\begin{definition}\label{StoSerCurve}
\textbf{(i.d Stochastic Service Curve).} 

A system is said to provide an \emph{inter-departure time (i.d) stochastic service curve} $\gamma\in\mathcal{G}$ with bounding function $j\in\bar{\mathcal{G}}$, denoted by $\mathcal{S}\sim_{id}\langle\gamma,j\rangle$, if for all $n\geq 0$ and all $x\geq 0$, there holds
\begin{equation}
P\Big\{d(n) - a\bar{\otimes}\gamma(n) > x\Big\}\leq j(x).
\label{eq:idservicecurve}
\end{equation} 
\end{definition}

\textbf{Example 5.} Consider two nodes, the sender and the receiver, communicate through an error-prone wireless link. Packets have fixed-length.~Packets arriving to the sender node are served in FIFO manner.~Assume the guaranteed per-packet service time is $\delta$ without any error.~To simplify the analysis, assume the time slot length equals $\delta$.~The sender sends packets correctly only when the link is in \lq good\rq~ condition.~If the link is in \lq bad\rq~condition, no packets can be sent correctly. In addition, the sender can send the head-of-queue packet only at the beginning of a time slot, i.e., the time period during which the link is in \lq bad\rq condition should be an integer times of $\delta$. The probability that a packet can be sent correctly is determined by packet error rate (PER). PER is determined by the packet length and the bit error rate (BER). Here, we assume packet errors happen independently and the same PER denoted by $P_e$ is applied to all packets. The successful transmission probability of one packet is hence $1-P_e$.

Suppose $P\{\Delta(n)=i\}=P_e^{i-1}(1-P_e)$, $i\geq 1$, where $\Delta(n)$ represents the number of time slots necessary to successfully send the $n^{th}$ packet with respect to the successful transmission probability $1-P_e$. The number of time slots necessary to successfully send $n$ packets is $\sum_{k=1}^n\Delta(k)$ which has the negative binomial distribution
\begin{displaymath}
P\Big\{\sum_{k=1}^n\Delta(k)=i\Big\}=\left\{\begin{array}{ll}\begin{pmatrix}i-1\\n-1\end{pmatrix}(1-P_e)^nP_e^{i-n}, ~~i\geq n\\
0, ~~~~~~~~~~~~~~~~~~~~~~~~~~~~~~i < n\end{array}\right.
\end{displaymath}  
Then the sender provides to its input a stochastic service curve $\gamma$ which has the following distribution
\begin{displaymath}
P\{\gamma(n)=\lceil\frac{\tau}{\delta}\rceil\}=\begin{pmatrix}\lceil\frac{\tau}{\delta}\rceil-1\\n-1\end{pmatrix}(1-P_e)^nP_e^{\lceil\frac{\tau}{\delta}\rceil-n}
\end{displaymath} 
where $\tau$ is the guaranteed service time to successfully send $n$ packets and $\lceil x\rceil$ denotes the smallest integer greater than or equal to $x$. 

We can find $n_0\leq n$ such that $a\bar{\otimes}\gamma(n)$ takes its maximum value, i.e., $a\bar{\otimes}\gamma(n)=a(n_0)+\gamma(n-n_0+1)$. From Eq.(\ref{eq:idservicecurve}), we have
\begin{displaymath}
P\{\gamma(n-n_0+1) < d(n)-a(n_0)- x\}\leq j(x)
\end{displaymath}
where 
\begin{displaymath}
j(x)=\sum_{i=n}^{\lceil\frac{d(n)-a(n_0)-x}{\delta}\rceil-1}\begin{pmatrix}i-1\\n-1\end{pmatrix}(1-P_e)^nP_e^{i-n}
\end{displaymath}

In Sec.~\ref{Sec-Property}, we show that many results can be derived from the i.d stochastic service curve model. However, without additional constraints, we have difficulty in proving the concatenation property for i.d stochastic service curve. To address this difficulty, we introduce a stronger definition in the following subsection.

\subsection{Constrained Stochastic Service Curve}
The constrained stochastic service curve model is generalized from the (deterministic) service curve model based on its duality principle. From Lemma \ref{duality}, we know that a system with input $a(n)$ and output $d(n)$ has a service curve $\gamma(n)$ if and only if for $\forall n\geq 0$, 
\begin{equation}
~~~~\sup_{0\leq m\leq n}\{d(m) - a\bar{\otimes}\gamma(m)\}\leq x.
\label{eq:servicecurve}
\end{equation} 
Inequality (\ref{eq:servicecurve}) provides the basis to generalize the (deterministic) service curve model to the constrained stochastic service curve defined as follows:

\begin{definition}\label{ConstrainedSerCurve}
\textbf{(Constrained Stochastic Service Curve).} A system is said to provide a \emph{constrained stochastic service curve} (c.s) $\gamma\in\mathcal{G}$ with bounding function $j\in\bar{\mathcal{G}}$, denoted by $\mathcal{S}\sim_{cs}\langle\gamma,j\rangle$, if for $\forall n,x\geq 0$, there holds
\begin{equation}
P\Big\{\sup_{0\leq m\leq n}[d(m) - a\bar{\otimes}\gamma(m)] > x\Big\}\leq j(x).
\end{equation} 
\end{definition}

The following theorem establishes relationships between i.d stochastic service curve and c.s stochastic service curve. 
\begin{theorem}\label{stosercur2constosercur}
\begin{enumerate}
\item If a server $\mathcal{S}$ provides to its input $a(n)$ a c.s stochastic service curve $\gamma(n)$ with bounding function $j(x)\in\bar{\mathcal{G}}$, it provides to the input $a(n)$ an i.d stochastic service curve $\gamma(n)$ with the same bounding function $j(x)\in\bar{\mathcal{G}}$, i.e., $\mathcal{S}\sim_{id}\langle\gamma,j\rangle$;

\item If a server $\mathcal{S}$ provides to its input $a(n)$ an i.d stochastic service curve $\gamma(n)$ with bounding function $j(x)\in\bar{\mathcal{F}}$, it provides to the input $a(n)$ a c.s stochastic service curve $\gamma_{+\eta}(n)=\gamma(n)+\eta\cdot n$ with bounding function $j^{\eta}(x)\in\bar{\mathcal{F}}$ for $\forall\eta>0$, where 
\begin{displaymath}
j_{\eta}(x) = \Big[\frac{1}{\eta}\int_{x-\eta\cdot n}^n j(y)dy\Big]_1.
\end{displaymath}
\end{enumerate}
\end{theorem}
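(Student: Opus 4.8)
The plan is to treat the two implications separately, mirroring the structure of Theorem~\ref{it2vsdrelation} and its proof. No independence assumption is needed anywhere, since every step rests on event inclusion and the union bound rather than on Lemma~\ref{lemma1}.

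For the first part I would argue by event inclusion, exactly as in the ``if'' direction of the duality principle (Lemma~\ref{duality}). Since $n$ is always an admissible index in the outer supremum, the inequality $d(n)-a\bar{\otimes}\gamma(n)\le \sup_{0\le m\le n}[d(m)-a\bar{\otimes}\gamma(m)]$ holds surely, so the event $\{d(n)-a\bar{\otimes}\gamma(n)>x\}$ is contained in $\{\sup_{0\le m\le n}[d(m)-a\bar{\otimes}\gamma(m)]>x\}$. Taking probabilities and invoking the c.s.\ hypothesis yields $P\{d(n)-a\bar{\otimes}\gamma(n)>x\}\le j(x)$, which is precisely $\mathcal{S}\sim_{id}\langle\gamma,j\rangle$ with the same bounding function.

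The second part is the substantive one. First I would pass from the supremum to a sum by the union bound, $P\{\sup_{0\le m\le n}[d(m)-a\bar{\otimes}\gamma_{+\eta}(m)]>x\}\le\sum_{m=0}^{n}P\{d(m)-a\bar{\otimes}\gamma_{+\eta}(m)>x\}$. The naive estimate of each term is useless: because $\gamma_{+\eta}\ge\gamma$ pointwise we have $a\bar{\otimes}\gamma_{+\eta}(m)\ge a\bar{\otimes}\gamma(m)$, so each term is only bounded by $P\{d(m)-a\bar{\otimes}\gamma(m)>x\}\le j(x)$, which discards the rate term $\eta\cdot n$ entirely and leaves a useless factor with no decay. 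The key step is to recover the $\eta$-dependence by conditioning on where the plain convolution is attained: letting $k\in\{0,\dots,m\}$ denote the smallest index with $a\bar{\otimes}\gamma(m)=a(k)+\gamma(m-k)$, and splitting the sample space into the disjoint events indexed by this attainment point. On the $k$-th piece, $a\bar{\otimes}\gamma_{+\eta}(m)\ge a(k)+\gamma(m-k)+\eta(m-k)$, so $\{d(m)-a\bar{\otimes}\gamma_{+\eta}(m)>x\}$ forces $d(m)-a\bar{\otimes}\gamma(m)=d(m)-a(k)-\gamma(m-k)>x+\eta(m-k)$. Bounding each piece by the i.d.\ hypothesis at the shifted level and writing $l=m-k$ for the ``gap'' to the attainment point gives $P\{d(m)-a\bar{\otimes}\gamma_{+\eta}(m)>x\}\le\sum_{l=0}^{m}j(x+\eta l)$.

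It remains to control the resulting sums. The inner sum $\sum_{l}j(x+\eta l)$ is a shifted series, and exactly as in the proof of Theorem~\ref{it2vsdrelation}.(2) it converges and is dominated by an integral precisely because $j\in\bar{\mathcal{F}}$ guarantees a finite $1$-fold integration, e.g.\ $\sum_{l\ge0}j(x+\eta l)\le j(x)+\frac{1}{\eta}\int_{x}^{\infty}j(y)\,dy$. Collecting the contributions over $m$ and re-expressing the accumulated sum as the stated integral, then truncating any probability at $1$ via the operator $[\,\cdot\,]_1$, produces the bounding function $j_{\eta}$; membership $j_{\eta}\in\bar{\mathcal{F}}$ then follows from $j\in\bar{\mathcal{F}}$ together with monotonicity of integration. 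The main obstacle, and the place where the hypothesis $j\in\bar{\mathcal{F}}$ rather than merely $j\in\bar{\mathcal{G}}$ is indispensable, is exactly this convergence step: without the attainment-point conditioning the slack created by $\eta\cdot n$ is thrown away, and without integrability the series over the gap need not be finite, so the capped integral would collapse to the trivial bound $1$.
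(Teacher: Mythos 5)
Your first part coincides with the paper's: both rest on the sure inequality $d(n)-a\bar{\otimes}\gamma(n)\leq\sup_{0\leq m\leq n}[d(m)-a\bar{\otimes}\gamma(m)]$ and event inclusion, so nothing to add there. For the second part your route genuinely differs from the paper's. The paper does not condition on the attainment point; it uses the crude sure bound $a\bar{\otimes}\gamma_{+\eta}(m)\geq a\bar{\otimes}\gamma(m)+\eta\cdot m-\eta\cdot n$ (valid because $\eta(m-k)\geq\eta(m-n)$ for every candidate index $k\leq n$), so that $\sup_{0\leq m\leq n}[d(m)-a\bar{\otimes}\gamma_{+\eta}(m)]\leq\sup_{0\leq m\leq n}[d(m)-a\bar{\otimes}\gamma(m)-\eta m]+\eta n$; a single union bound over $m$ then gives $\sum_{m=1}^{n}j(x-\eta n+\eta m)=\sum_{l=0}^{n-1}j(x-\eta l)$, and monotonicity of $j$ turns this single sum directly into $\frac{1}{\eta}\int_{x-\eta n}^{x}j(y)\,dy$ (the upper limit $n$ in the printed formula is evidently a typo for $x$). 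In other words, the paper harvests the slack in the outer index (the distance of $m$ from $n$), whereas you harvest it in the inner index (the distance of $m$ from the attainment point $k$).

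The gap is in your final step. Your decomposition legitimately yields $P\{d(m)-a\bar{\otimes}\gamma_{+\eta}(m)>x\}\leq\sum_{l=0}^{m}j(x+\eta l)$, but after the outer union bound you are left with the double sum $\sum_{m=0}^{n}\sum_{l=0}^{m}j(x+\eta l)=\sum_{l=0}^{n}(n+1-l)\,j(x+\eta l)$, and the claim that this ``re-expresses as the stated integral'' fails: the stated bound integrates $j$ over $[x-\eta n,x]$, i.e.\ samples $j$ at arguments \emph{below} $x$, whereas your sum samples $j$ only at arguments \emph{above} $x$, with multiplicities growing up to $n+1$. Neither quantity dominates the other in general --- for $j(y)=e^{-y}$, $n=1$ and small $\eta$ your sum is about $3j(x)$ while the stated bound is about $j(x)$ --- so your argument establishes that $\gamma_{+\eta}$ is a c.s.\ curve with \emph{some} bounding function, namely $\big[\sum_{l=0}^{n}(n+1-l)j(x+\eta l)\big]_1$ (finite because $j\in\bar{\mathcal{F}}$ controls the iterated integrations), but not with the specific $j_{\eta}$ asserted in the theorem. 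Since the theorem states an explicit formula, the proof must produce that formula; to do so you need to replace the attainment-point conditioning by the paper's uniform $\eta(m-n)$ shift. Your finer decomposition is not wrong --- it is arguably tighter for rapidly decaying $j$ --- but it lands on an incomparable bound, and the sentence bridging it to the stated integral is the step that would not survive scrutiny.
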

\begin{proof}
The first part follows since there always holds $d(n)-a\bar{\otimes}\gamma(n)\leq \sup_{0\leq m\leq n}\{d(m)-a\bar{\otimes}\gamma(m)\}$. 

For the second part, there holds for $\forall 0\leq m\leq n$, 
\begin{displaymath}
a\bar{\otimes}\gamma_{+\eta}(m) \geq a\bar{\otimes}\gamma(m) + \eta\cdot m - \eta\cdot n 
\end{displaymath}
and then 
\begin{displaymath}
d(m) - a\bar{\otimes}\gamma_{+\eta}(m) \leq d(m) - a\bar{\otimes}\gamma(m) - \eta\cdot m + \eta\cdot n. 
\end{displaymath}
Thus, we obtain
\begin{displaymath}
P\big\{\sup_{0\leq m\leq n}\{d(m)-a\bar{\otimes}\gamma_{+\eta}(m)\} > x\big\}~~~~~~~~~~~~~~~~~~~~~~~
\end{displaymath}
\begin{displaymath}
\leq P\big\{\sup_{1\leq m\leq n}\big[d(m)-a\bar{\otimes}\gamma(m) - \eta(m)\big]^{+} > x - \eta\cdot n\big\}
\end{displaymath}
for which when $x - \eta\cdot n<0$, the right hand side is equal to 1. In the following, we assume $x - \eta\cdot n\leq 0$ under which, there holds
\begin{displaymath}
P\big\{\sup_{0\leq m\leq n}\{d(m)-a\bar{\otimes}\gamma_{+\eta}(m)\} > x\big\}~~~~~~~~~~~~~~~~
\end{displaymath}
\begin{displaymath}
\leq \sum_{m=1}^n P\big\{\big[d(m)-a\bar{\otimes}\gamma(m) - \eta(m)\big] > x - \eta\cdot n\big\}
\end{displaymath}
\begin{displaymath}
\leq \sum_{m=1}^nj(x - \eta\cdot n + \eta\cdot m) \leq \frac{1}{\eta}\int_{x-\eta\cdot n}^n j(y)dy.~~~~
\end{displaymath}
Because the probability is always not greater than 1, the second part follows from the above inequality. 
\end{proof}
In the second part of the above theorem, $j(x)\in\bar{\mathcal{F}}$ while not $\in\bar{\mathcal{G}}$. If the requirement on the bounding function is relaxed to $j(x)\in\bar{\mathcal{G}}$, the above relationship may not hold in general.

\section{Basic Properties}\label{Sec-Property}
This section presents results derived from the time-domain traffic models and server models introduced in Sec.~\ref{Section-Model}. Particularly, we investigate the five basic properties introduced in Sec.~\ref{Sec-intro}, which are service guarantees including delay bound and backlog bound, output characterization, concatenation property and superposition property. However, some properties can directly be proved only for the combination of a specific traffic model and a specific server model.~This explains why we need to establish the various relationships between models in Sec.~\ref{Section-Model}. With these relationships, we can obtain the corresponding results for models which we are interested in.  

\subsection{Service Guarantees}\label{Sec-serviceguarantee}
This subsection investigates probabilistic bounds on delay and backlog under the combination of v.s.d stochastic arrival curve and i.d stochastic service curve. 

We start with deriving the bound on delay that a packet would experience in a system.  
\begin{theorem}\label{Delaybound}
\textbf{(Delay Bound).}
Consider a system $\mathcal{S}$ providing an i.d stochastic service curve $\gamma\in\mathcal{G}$ with bounding function $j\in\bar{\mathcal{G}}$ to the input which has a v.s.d arrival curve $\lambda\in\mathcal{G}$ with bounding function $h\in\bar{\mathcal{G}}$. Let $D(n) = d(n) - a(n)$ be the delay in the system of the $n^{th}(\geq0)$ packet. For $\forall x\geq 0$, $D(n)$ is bounded by
\begin{equation}
P\{D(n) > x\} \leq j\otimes h(x - \gamma\oslash\lambda(0)).
\label{eq:DelayBound}
\end{equation} 
\end{theorem}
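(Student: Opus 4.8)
The plan is to decompose the system delay $D(n)=d(n)-a(n)$ by inserting the virtual departure time $a\bar{\otimes}\gamma(n)$, writing
\begin{displaymath}
D(n) = [d(n)-a\bar{\otimes}\gamma(n)] + [a\bar{\otimes}\gamma(n)-a(n)].
\end{displaymath}
The first bracket is exactly the quantity whose tail the i.d stochastic service curve controls, so that $P\{d(n)-a\bar{\otimes}\gamma(n)>u\}\leq j(u)$. The second bracket will be tied to the v.s.d arrival curve, and the bridge between the service and traffic descriptions is the deterministic offset $\gamma\oslash\lambda(0)$.

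First I would rewrite the arrival part using the definition of max-plus convolution,
\begin{displaymath}
a\bar{\otimes}\gamma(n)-a(n) = \sup_{0\leq m\leq n}\big\{\gamma(n-m)-[a(n)-a(m)]\big\}.
\end{displaymath}
Splitting $\gamma(n-m)=[\gamma(n-m)-\lambda(n-m)]+\lambda(n-m)$ and using subadditivity of the supremum over the two summands, this is bounded above by
\begin{displaymath}
\sup_{0\leq k\leq n}\{\gamma(k)-\lambda(k)\} + \sup_{0\leq m\leq n}\big\{\lambda(n-m)-[a(n)-a(m)]\big\}.
\end{displaymath}
The substitution $k=n-m$ turns the first supremum into $\sup_{0\leq k\leq n}\{\gamma(k)-\lambda(k)\}\leq\sup_{y\geq 0}\{\gamma(y)-\lambda(y)\}=\gamma\oslash\lambda(0)$, a deterministic constant, while the second supremum is precisely the random variable $V$ whose tail the v.s.d arrival curve bounds by $h$.

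Writing $U=d(n)-a\bar{\otimes}\gamma(n)$, the two steps give $D(n)\leq U+V+\gamma\oslash\lambda(0)$, hence $P\{D(n)>x\}\leq P\{U+V>x-\gamma\oslash\lambda(0)\}$. Since $U+V$ is a sum of two random variables with $P\{U>u\}\leq j(u)$ and $P\{V>v\}\leq h(v)$, I would invoke Lemma~\ref{lemma1} to obtain $\bar{F}_{U+V}\leq\bar{F}_U\otimes\bar{F}_V$, and then use monotonicity of the min-plus convolution in each argument (legitimate because $j,h\in\bar{\mathcal{G}}$ are non-negative and wide-sense decreasing) to replace $\bar{F}_U,\bar{F}_V$ by $j,h$. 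This yields the claimed bound $P\{D(n)>x\}\leq j\otimes h(x-\gamma\oslash\lambda(0))$.

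The main obstacle I anticipate is the bookkeeping in the arrival-part split: confirming that the subadditive splitting of the supremum is valid and that the index substitution $k=n-m$ correctly produces the deconvolution $\gamma\oslash\lambda(0)$ rather than some other offset. The probabilistic combination via Lemma~\ref{lemma1} is routine, but the monotonicity step deserves a line of justification, namely that dominating each summand's CCDF pointwise by a wide-sense decreasing bounding function lets one dominate their min-plus convolution, which follows from comparing the two expressions termwise inside the infimum.
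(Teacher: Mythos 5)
Your proposal is correct and follows essentially the same route as the paper's proof: the same decomposition of $D(n)$ via $a\bar{\otimes}\gamma(n)$, the same add-and-subtract of $\lambda(n-m)$ inside the supremum to isolate the v.s.d term and the constant $\gamma\oslash\lambda(0)$, and the same application of Lemma~\ref{lemma1} to combine the two tails into $j\otimes h$. Your explicit note on the monotonicity step (dominating the CCDFs by $j$ and $h$ inside the min-plus convolution) is a detail the paper leaves implicit, but it is the same argument.
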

\begin{proof}
For $\forall n\geq 0$, there holds 
\begin{displaymath}
d(n) - a(n) = \big[d(n) - a\bar{\otimes}\gamma(n)\big] + \big[a\bar{\otimes}\gamma(n) - a(n)\big]~~~~~~~~~~
\end{displaymath}
\begin{displaymath}
= \big[d(n) - a\bar{\otimes}\gamma(n)\big] + \sup_{0 \leq m \leq n}\big\{\lambda(n-m) - \big[a(n) - a(m)\big] 
\end{displaymath}
\begin{displaymath}
~~+ \gamma(n-m) - \lambda(n-m)\big\}
\end{displaymath}
\begin{displaymath}
\leq \big[d(n) - a\bar{\otimes}\gamma(n)\big] + \sup_{0\leq m\leq n}\big\{\lambda(n - m) - \big[a(n) - a(m)\big]\big\} 
\end{displaymath}
\begin{displaymath}
~+ \sup_{0\leq m\leq n}\big\{\gamma(n-m)-\lambda(n-m)\big\}
\end{displaymath}
\begin{displaymath}
\leq \big[d(n) - a\bar{\otimes}\gamma(n)\big] + \sup_{0\leq m\leq n}\big\{\lambda(n - m) - \big[a(n) - a(m)\big]\big\} 
\end{displaymath}
\begin{equation}
~ + \sup_{k\geq 0}\big\{\gamma(k) - \lambda(k)\big\}.
\label{eq:proof1}
\end{equation}
The right-hand side of Eq.(\ref{eq:proof1}) implies a sufficient condition to obtain $P\{D(n) > x\}$, which is that $P\big\{d(n) - a\bar{\otimes}\gamma(n) > x\big\}$ and $P\Big\{\sup_{0\leq m\leq n}\big\{\lambda(n - m) - \big[a(n) - a(m)\big]\big\} > x\Big\}$ are known. To ensure the system's stability, we should also have
\begin{equation}
~~~~~~~~~~\lim_{k\to \infty}\frac{1}{k}[\gamma(k) - \lambda(k)] \leq 0.
\label{eq:systemstable}
\end{equation} 
In the rest of the paper, without explicitly stating, we shall assume inequality (\ref{eq:systemstable}) holds.
From Lemma \ref{lemma1} and $\sup_{k\geq 0}\big\{\gamma(k)\\-\lambda(k)\big\}=\gamma\oslash\lambda(0)$, we conclude 
\begin{displaymath}
P\{D(n) > x\} \leq j\otimes h(x-\gamma\oslash\lambda(0)).~~~~~~~~~~~~~~~~
\end{displaymath}
\end{proof}

Next, we consider backlog bound of a system. By definition, the backlog in the system at time $t\geq 0$ is $\mathcal{B}(t)=\mathcal{A}(t)-\mathcal{A}^*(t)$. If $a(n)$ is the arrival time of the latest packet arriving to the system by time $t$, then $\mathcal{B}(t)$ is
\begin{equation}
\mathcal{B}(t) \leq \inf\big\{k\geq 0: d(n-k) \leq a(n)\big\}.
\label{eq:backlogDef}
\end{equation}
Eq.(\ref{eq:backlogDef}) implies that, for $\forall x \geq 0$, if $\mathcal{B}(t)> x$, there must be $a(n)<d(n-x)$. Thus event $\{\mathcal{B}(t) > x\}$ implies event $\{a(n) < d(n-x)\}$ and $P\{\mathcal{B}(t) > x\} \leq P\{a(n) < d(n-x)\}$. Then we have the following result for backlog. 
\begin{theorem}\label{BacklogBound}
\textbf{(Backlog Bound).}
Consider a system $\mathcal{S}$ providing an i.d stochastic service curve $\gamma\in\mathcal{G}$ with bounding function $j\in\bar{\mathcal{G}}$ to the input which has a v.s.d stochastic arrival curve $\lambda\in\mathcal{G}$ with bounding function $h\in\bar{\mathcal{G}}$. The backlog at time $t$ ($t\geq 0$), $\mathcal{B}(t)$, is bounded by
\begin{equation}
P\{\mathcal{B}(t) > H(\lambda,\gamma+x)\} \leq j\otimes h(x)
\end{equation}
for any $x\geq 0$, where, $H(\lambda,\gamma+x)=\sup_{n\geq 0}\big\{\inf[k\geq 0:\gamma(n-k)+x\leq\lambda(n)]\big\}$ is the maximum horizontal distance between functions $\lambda(n)$ and $\gamma(n)+x$ for $\forall x\geq 0$.
\end{theorem}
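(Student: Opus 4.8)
The plan is to reduce the backlog event to a statement about a single pair of arrival and departure epochs, and then to replay the decomposition used in the Delay Bound proof (Theorem \ref{Delaybound}). Set $H\equiv H(\lambda,\gamma+x)$ and let $n$ be the index of the latest packet arriving by time $t$. Applying the event inclusion established just above the theorem statement (``if $\mathcal{B}(t)>x$ then $a(n)<d(n-x)$'') with threshold $H$ in place of the dummy gives
\begin{displaymath}
P\{\mathcal{B}(t)>H\}\leq P\{a(n)<d(n-H)\}=P\{d(n-H)-a(n)>0\}.
\end{displaymath}
So it suffices to upper-bound $P\{d(n-H)-a(n)>0\}$ by $j\otimes h(x)$.

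First I would isolate the only deterministic fact about the horizontal distance that is needed. Writing $k_\ell=\inf\{k\geq0:\gamma(\ell-k)+x\leq\lambda(\ell)\}$, the definition of $H$ gives $k_\ell\leq H$ for every $\ell\geq0$; since $\gamma\in\mathcal{G}$ is wide-sense increasing and $\ell-H\leq\ell-k_\ell$, this yields the key inequality
\begin{displaymath}
\gamma(\ell-H)+x\leq\gamma(\ell-k_\ell)+x\leq\lambda(\ell)\qquad\text{for all }\ell\geq0.
\end{displaymath}
This is the bridge that converts the service curve $\gamma$ into the arrival curve $\lambda$ at the cost of a horizontal shift $H$ and a vertical slack $x$.

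Next I would decompose, exactly as in the Delay Bound proof,
\begin{displaymath}
d(n-H)-a(n)=\big[d(n-H)-a\bar{\otimes}\gamma(n-H)\big]+\big[a\bar{\otimes}\gamma(n-H)-a(n)\big].
\end{displaymath}
Expanding the max-plus convolution in the second bracket and applying the key inequality with $\ell=n-m$ (valid for $0\leq m\leq n-H$) gives
\begin{displaymath}
a\bar{\otimes}\gamma(n-H)-a(n)=\sup_{0\leq m\leq n-H}\big\{\gamma(n-m-H)-[a(n)-a(m)]\big\}
\end{displaymath}
\begin{displaymath}
\leq\sup_{0\leq m\leq n}\big\{\lambda(n-m)-[a(n)-a(m)]\big\}-x.
\end{displaymath}
Denoting $U=d(n-H)-a\bar{\otimes}\gamma(n-H)$ and $V=\sup_{0\leq m\leq n}\{\lambda(n-m)-[a(n)-a(m)]\}$, the event $\{d(n-H)-a(n)>0\}$ is therefore contained in $\{U+V>x\}$.

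Finally, the i.d stochastic service curve definition gives $P\{U>y\}\leq j(y)$ and the v.s.d stochastic arrival curve definition gives $P\{V>y\}\leq h(y)$, so Lemma \ref{lemma1} applied to the sum $U+V$, together with the monotonicity of min-plus convolution, yields $P\{U+V>x\}\leq j\otimes h(x)$, completing the argument. I expect the horizontal-distance step to be the main obstacle: it requires the infimum defining $k_\ell$ to be attained (which holds in the unit-step discrete-time setting) and careful bookkeeping of the index range $[0,n-H]$ against $[0,n]$ as well as of the boundary case $n<H$, whereas the concluding probabilistic step is routine once $U$ and $V$ have been identified.
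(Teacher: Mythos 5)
Your proposal is correct and takes essentially the same route as the paper's own proof: the same event inclusion $\{\mathcal{B}(t)>H\}\subseteq\{a(n)<d(n-H)\}$, the same decomposition of $d(n-H)-a(n)$ into the service-curve deviation plus the max-plus convolution term, the same use of the horizontal distance $H$ to trade $\gamma(\cdot-H)$ for $\lambda(\cdot)-x$, and the same concluding appeal to Lemma~\ref{lemma1}. Your write-up is in fact a slightly cleaner rendering of the paper's argument, which contains some typographical slips in the index bookkeeping that you avoid.
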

\begin{proof}
Similar to prove the delay bound, we have
\begin{displaymath}
d(n-x) - a(n) = [d(n-x) - a\bar{\otimes}\gamma(n-x)] + [a\bar{\otimes}\gamma(n-x) - a(n)]
\end{displaymath} 
\begin{displaymath}
= \big[d(n-x) - a\bar{\otimes}\gamma(n-x)\big] + \sup_{0\leq k\leq n-x}\big\{\lambda(n-k) - [a(n) - a(k)]  
\end{displaymath}
\begin{displaymath}
+~ \gamma(n-x-k)-\lambda(n-k)\big\}~~~~~~~~~~
\end{displaymath}
\begin{displaymath}
\leq \big[d(n-x) - a\bar{\otimes}\gamma(n-x)\big] + \sup_{0\leq k\leq n-x}\big\{\lambda(n-k) - \big[a(n+x)~~
\end{displaymath}
\begin{displaymath}
 - a(k)\big]\big\} + \sup_{0\leq k\leq n-x}\big\{\gamma(n-x-k) - \lambda(n-k)\big\}~~~~~~~~
\end{displaymath}
Let $v = n-k$. The above inequality is written as
\begin{displaymath}
d(n-x) - a(n) \leq \big[d(n-x) - a\bar{\otimes}\gamma(n-x)\big] + \sup_{0\leq k\leq n}\big\{\lambda(n-k) 
\end{displaymath}
\begin{displaymath}
~~~~~~~~~~~~~~~~~~~- \big[a(n) - a(k)\big]\big\} + \sup_{x\leq v\leq n}\big\{\gamma(v-x) - \lambda(v)\big\}
\end{displaymath}
Let $x = H(\lambda,\gamma+y)$, we have
\begin{displaymath}
d\big(n-h(\lambda,\gamma+y)\big) - a(n)\leq \big[d\big(n-h(\lambda,\gamma+y)\big)-~~~~~~~~~~~~~~~~ 
\end{displaymath}
\begin{equation}
a\bar{\otimes}\gamma\big(n-h(\lambda,\gamma+y)\big)\big] + \sup_{0\leq k\leq n}\big\{\lambda(n-k) - [a(n) - a(k)]\big\} -y
\label{eq:derBacklog}
\end{equation}
Under the same conditions as analyzing the delay, we obtain
\begin{displaymath}
P\{\mathcal{B}(t) > H(\lambda,\gamma+x)\} \leq j\otimes h(x).
\end{displaymath}
\end{proof}

\subsection{Output Characterization}
This subsection presents the result for characterizing the departure process from a system. 
\begin{theorem}\label{Outputchar1}
\textbf{(Output Characterization).} Consider a system $\mathcal{S}$ provides an i.d stochastic service curve $\gamma(n)\in\mathcal{G}$ with bounding function $j(x)\in\bar{\mathcal{G}}$ to its input which has a v.s.d stochastic arrival curve $\lambda(n)\in\mathcal{G}$ with bounding function $h(x)\in\bar{\mathcal{G}}$. The output has an i.a.t stochastic arrival curve $\lambda\bar{\oslash}\gamma(n-m)$ with bounding function $j\otimes h(x)\in\bar{\mathcal{G}}$.
\end{theorem}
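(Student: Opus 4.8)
The plan is to establish an i.a.t bound on the departure process $d(n)$ directly from Definition \ref{itarrcurve}, namely to show that for all $0\leq m\leq n$ and all $x\geq 0$ there holds $P\{(\lambda\bar{\oslash}\gamma)(n-m)-[d(n)-d(m)]>x\}\leq j\otimes h(x)$. The strategy mirrors the delay-bound proof (Theorem \ref{Delaybound}): I would bound the deterministic shortfall $(\lambda\bar{\oslash}\gamma)(n-m)-[d(n)-d(m)]$ from above by a sum of two error quantities, one controlled by the service bounding function $j$ and the other by the arrival bounding function $h$, and then invoke the CCDF convolution inequality of Lemma \ref{lemma1}.

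First I would lower-bound the inter-departure time. Since a packet cannot leave before its last bit has arrived, $d(n)\geq a(n)$, so $d(n)-d(m)\geq a(n)-d(m)$. Writing $a(n)-d(m)=[a(n)-a\bar{\otimes}\gamma(m)]-[d(m)-a\bar{\otimes}\gamma(m)]$ isolates the service error $V:=d(m)-a\bar{\otimes}\gamma(m)$, whose positive tail satisfies $P\{V>x\}\leq j(x)$ by the i.d stochastic service curve assumption.

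The key algebraic step is to lower-bound $a(n)-a\bar{\otimes}\gamma(m)=\inf_{0\leq i\leq m}\{[a(n)-a(i)]-\gamma(m-i)\}$. Inserting $\lambda(n-i)$ and using $\inf(A_i-B_i)\geq\inf A_i-\sup B_i$ splits this into $\inf_{0\leq i\leq m}\{\lambda(n-i)-\gamma(m-i)\}$ minus $U:=\sup_{0\leq i\leq n}\{\lambda(n-i)-[a(n)-a(i)]\}$. The substitution $k=m-i$ turns the first infimum into $\inf_{0\leq k\leq m}\{\lambda(n-m+k)-\gamma(k)\}\geq\inf_{k\geq 0}\{\lambda(n-m+k)-\gamma(k)\}=(\lambda\bar{\oslash}\gamma)(n-m)$, while enlarging the index range from $[0,m]$ to $[0,n]$ (legitimate since $m\leq n$) makes the second term exactly the v.s.d supremum, whose tail is bounded by $h$. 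This is the step I expect to demand the most care: keeping the $\inf$/$\sup$ inequalities pointing the right way, recognizing the max-plus deconvolution after the reindexing, and confirming that the supremum genuinely matches the v.s.d quantity --- which is precisely why the theorem requires the v.s.d (and not merely the i.a.t) arrival assumption.

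Combining these two bounds yields $(\lambda\bar{\oslash}\gamma)(n-m)-[d(n)-d(m)]\leq U+V$ deterministically. Hence $P\{(\lambda\bar{\oslash}\gamma)(n-m)-[d(n)-d(m)]>x\}\leq P\{U+V>x\}$, and applying Lemma \ref{lemma1} to $U+V$ together with $P\{U>x\}\leq h(x)$, $P\{V>x\}\leq j(x)$, and the monotonicity and commutativity of min-plus convolution delivers the bound $j\otimes h(x)$. I would finally remark that the stability assumption (\ref{eq:systemstable}) keeps $\lambda\bar{\oslash}\gamma$ finite, so the resulting i.a.t characterization of the output is non-vacuous.
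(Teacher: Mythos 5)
Your proposal is correct and follows essentially the same route as the paper's own proof: start from $d(n)\geq a(n)$, isolate the service error $d(m)-a\bar{\otimes}\gamma(m)$, split $a(n)-a\bar{\otimes}\gamma(m)$ into the v.s.d supremum plus the max-plus deconvolution term $\lambda\bar{\oslash}\gamma(n-m)$ via reindexing and enlarging the index range, and finish with Lemma~\ref{lemma1}. The only difference is presentational (you bound $d(n)-d(m)$ from below where the paper bounds $-[d(n)-d(m)]$ from above), so there is nothing further to add.
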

\begin{proof}
For any two departure packets $m<n$, there holds 
\begin{displaymath}
d(n) - d(m) \geq a(n) - a\bar{\otimes}\gamma(m) + a\bar{\otimes}\gamma(m) - d(m) ~~~~~~~~~~~~~~~~
\end{displaymath}
\begin{displaymath}
-\big[d(n) - d(m)\big] \leq \big[d(m) - a\bar{\otimes}\gamma(m)\big] + a\bar{\otimes}\gamma(m) - a(n)~~~~~~~~~~~~
\end{displaymath}
\begin{displaymath}
\leq \big[d(m) - a\bar{\otimes}\gamma(m)\big] + \sup_{0\leq k\leq n}\big\{\lambda(n-k) - [a(n) - a(k)]\big\}~~~~~~ 
\end{displaymath}
\begin{displaymath}
+ \sup_{0\leq v \leq m}\big\{\gamma(v) - \lambda(n-m+v)\big\}~~~~~~~~~~~~~~~~~~~~~~~~~~~~~~~
\end{displaymath}
\begin{displaymath}
= \big[d(m) - a\bar{\otimes}\gamma(m)\big] + \sup_{0\leq k\leq n}\big\{\lambda(n-k) - [a(n) - a(k)]\big\}~~~~~~~~~ 
\end{displaymath}
\begin{displaymath}
- \inf_{0\leq v \leq m}\big\{\lambda(n - m + v) - \gamma(v)\big\}~~~~~~~~~~~~~~~~~~~~~~~~~~~~~
\end{displaymath}
Adding $\inf_{0\leq v \leq m}\big\{\lambda(n - m + v) - \gamma(v)\big\}$ to both sides of the above inequality, we get
\begin{displaymath}
\inf_{0 \leq v \leq m}\big\{\lambda(n - m + v) - \gamma(v)\big\} - [d(n) - d(m)]~~~~~~~~~~~~~~~~~~~
\end{displaymath}
\begin{displaymath}
\leq \big[d(m) - a\bar{\otimes}\gamma(m)\big] + \sup_{0\leq k\leq n}\big\{\lambda(n-k) - [a(n) - a(k)]\big\}    
\end{displaymath}
With the same conditions as analyzing delay, we conclude
\begin{displaymath}
P\Big\{\lambda\bar{\oslash}\gamma(n-m) - [d(n) - d(m)] > x\Big\} \leq j\otimes h(x)
\end{displaymath}
\end{proof}

\subsection{Concatenation Property}\label{Concatenation}
The concatenation property uses an equivalent system to represent a system of multiple servers connected in tandem, each of which provides stochastic service curve to the input. Then the equivalent system provides the input a stochastic service curve, which is derived from the stochastic service curve provided by all involved individual servers. 

\begin{theorem}\label{concatenation}
\textbf{(Concatenation Property).} Consider a flow passing through a network of $N$ systems in tandem. If each system $k(= 1,2,...,N)$ provides a c.s stochastic service curve $\mathcal{S}^k\sim_{cs}\langle\gamma^k,j^k\rangle$ to its input, then the network guarantees to the flow a c.s stochastic service curve $\mathcal{S}\sim_{cs}\langle\gamma,j\rangle$ with 
\begin{equation}
\gamma(n) = \gamma^1\bar{\otimes}\gamma^2\bar{\otimes}\cdot\cdot\cdot\bar{\otimes}\gamma^N(n)
\end{equation} 
\begin{equation}
j(x) = j^1\otimes j^2\otimes\cdot\cdot\cdot\otimes j^N(x).
\end{equation}
\end{theorem}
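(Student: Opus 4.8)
The plan is to reduce everything to the two-server case and then induct, mirroring the additive-decomposition style used in the delay-bound proof. For two systems in tandem, write $d^1(n)$ for the output of the first server (which is the input of the second) and $d(n)$ for the final output. The two hypotheses read $P\{\sup_{0\leq m\leq n}[d^1(m)-a\bar{\otimes}\gamma^1(m)]>x\}\leq j^1(x)$ and $P\{\sup_{0\leq m\leq n}[d(m)-d^1\bar{\otimes}\gamma^2(m)]>x\}\leq j^2(x)$, and the goal is to bound the probability that $\sup_{0\leq m\leq n}[d(m)-a\bar{\otimes}(\gamma^1\bar{\otimes}\gamma^2)(m)]$ exceeds $x$.

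First I would invoke the associativity of max-plus convolution, $a\bar{\otimes}(\gamma^1\bar{\otimes}\gamma^2)=(a\bar{\otimes}\gamma^1)\bar{\otimes}\gamma^2$, and split the pre-supremum quantity by adding and subtracting $d^1\bar{\otimes}\gamma^2(m)$:
\begin{displaymath}
d(m)-a\bar{\otimes}\gamma(m)=\big[d(m)-d^1\bar{\otimes}\gamma^2(m)\big]+\big[d^1\bar{\otimes}\gamma^2(m)-(a\bar{\otimes}\gamma^1)\bar{\otimes}\gamma^2(m)\big].
\end{displaymath}
The first bracket is exactly the second server's constrained quantity. The crux is the cross term, and the key auxiliary inequality I would establish is that for any functions $u,v$ and any $g\in\mathcal{G}$, $u\bar{\otimes}g(m)-v\bar{\otimes}g(m)\leq\sup_{0\leq k\leq m}[u(k)-v(k)]$: indeed, for each $k\leq m$ one has $u(k)+g(m-k)-v\bar{\otimes}g(m)\leq u(k)+g(m-k)-[v(k)+g(m-k)]=u(k)-v(k)$, and taking the supremum over $k$ on the left gives the claim. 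Applying this with $u=d^1$, $v=a\bar{\otimes}\gamma^1$, $g=\gamma^2$ bounds the cross term by $\sup_{0\leq k\leq m}[d^1(k)-a\bar{\otimes}\gamma^1(k)]$.

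Taking $\sup_{0\leq m\leq n}$ on both sides and collapsing the nested supremum via $\sup_{0\leq m\leq n}\sup_{0\leq k\leq m}=\sup_{0\leq k\leq n}$, I obtain the pathwise bound
\begin{displaymath}
\sup_{0\leq m\leq n}[d(m)-a\bar{\otimes}\gamma(m)]\leq Y_1+Y_2,
\end{displaymath}
where $Y_1=\sup_{0\leq k\leq n}[d^1(k)-a\bar{\otimes}\gamma^1(k)]$ and $Y_2=\sup_{0\leq m\leq n}[d(m)-d^1\bar{\otimes}\gamma^2(m)]$ satisfy $P\{Y_1>x\}\leq j^1(x)$ and $P\{Y_2>x\}\leq j^2(x)$. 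Since the left side is dominated by the sum $Y_1+Y_2$, Lemma \ref{lemma1} (which requires no independence) yields $P\{\sup_{0\leq m\leq n}[d(m)-a\bar{\otimes}\gamma(m)]>x\}\leq\bar{F}_{Y_1}\otimes\bar{F}_{Y_2}(x)$, and the monotonicity of min-plus convolution in each argument upgrades the right-hand side to $j^1\otimes j^2(x)$, using $\bar{F}_{Y_i}\leq j^i$ pointwise.

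Finally I would extend to $N$ servers by induction on $N$, treating the cascade of the first $k$ servers as a single c.s server with curve $\gamma^1\bar{\otimes}\cdots\bar{\otimes}\gamma^k$ and bounding function $j^1\otimes\cdots\otimes j^k$; associativity of both $\bar{\otimes}$ and $\otimes$ makes the induction close cleanly. I expect the main obstacle to be the cross-term estimate: it must be controlled \emph{without} assuming that the max-plus convolution attains its supremum at a convenient index, which is why the per-$k$ inequality established above (valid for every $k$ before the supremum is taken) is the correct device, rather than an argument that simply substitutes an optimizing index.
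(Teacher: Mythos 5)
Your proposal is correct and follows essentially the same route as the paper's proof: both reduce to the two-server case, use $d^1=a^2$ and associativity of $\bar{\otimes}$, derive the pathwise bound $\sup_{0\leq m\leq n}[d(m)-a\bar{\otimes}\gamma(m)]\leq Y_1+Y_2$, and close with Lemma~\ref{lemma1}. Your auxiliary inequality $u\bar{\otimes}g-v\bar{\otimes}g\leq\sup_{0\leq k\leq m}[u(k)-v(k)]$ is just a cleaner, modular packaging of the paper's in-place $\inf$/$\sup$ manipulation, so there is nothing substantive to add.
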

\begin{proof}
We shall only prove the two-node case, from which, the proof can be easily extended to the $N$-node case. The departure of the first node is the arrival to the second node, so $d^1(n) = a^2(n)$. In addition, the arrival to the network is the arrival to the first node, i.e., $a(n) = a^1(n)$, and the departure from the network is the departure from the second node, i.e., $d(n) = d^2(n)$, where, $a(n)$ and $d(n)$ denote the arrival process to and departure process from the network, respectively. We then have, 
\begin{displaymath}
\sup_{0\leq m \leq n}\{d(m) - a\bar{\otimes}\gamma^1\bar{\otimes}\gamma^2(m)\}
\end{displaymath}
\begin{equation}
= \sup_{0\leq m \leq n}\{d^2(m) - (a^1\bar{\otimes}\gamma^1)\bar{\otimes}\gamma^2(m)\} 
\label{eq:superdepart}
\end{equation}
Now let us consider any $m$, ($0\leq m\leq n$), for which we get,
\begin{displaymath}
d^2(m) - (a^1\bar{\otimes}\gamma^1)\bar{\otimes}\gamma^2(m)~~~~~~~~~~~~~~~~~~~~~~~~~~~~~~~~~~~~ ~~
\end{displaymath}
\begin{displaymath}
= d^2(m) - \sup_{0\leq k\leq m}\big\{a^1\bar{\otimes}\gamma^1(k) + \gamma^2(m-k) - d^1(k) + a^2(k)\big\} 
\end{displaymath}
\begin{displaymath}
= d^2(m) + \inf_{0\leq k\leq m}\big\{d^1(k) - a^1\bar{\otimes}\gamma^1(k) - \gamma^2(m-k) - a^2(k)\big\}    
\end{displaymath}
\begin{displaymath}
\leq \sup_{0\leq k\leq m}\{d^1(k)-a^1\bar{\otimes}\gamma^1(k)\}+d^2(m)~~~~~~~~~~~~~~~~~~~~~~~~~~~~~
\end{displaymath}
\begin{displaymath}
+\inf_{0\leq k\leq m}\big\{- [a^2(k) + \gamma^2(m-k)]\big\} ~~~~~~~~~~~~~~~~
\end{displaymath}
\begin{equation}
\leq \sup_{0\leq k\leq m}\{d^1(k) - a^1\bar{\otimes}\gamma^1(k)\} + [d^2(m) - a^2\bar{\otimes}\gamma^2(m)]
\label{eq:Interstep}
\end{equation}
Applying Eq.(\ref{eq:superdepart}) to Eq.(\ref{eq:Interstep}), we obtain
\begin{displaymath}
\sup_{0\leq m \leq n}\{d^2(m) - (a^1\bar{\otimes}\gamma^1)\bar{\otimes}\gamma^2(m)\}~~~~~~~~~~~~~~~~~~~~~~ ~~~~~~~~~~
\end{displaymath}
\begin{equation}
\leq \sup_{0\leq k\leq n}\{d^1(k) - a^1\bar{\otimes}\gamma^1(k)\} + \sup_{0\leq m\leq n}\{d^2(m) - a^2\bar{\otimes}\gamma^2(m)\}
\end{equation}
with which, since both nodes provide c.s stochastic service curve to their input, the theorem follows from Lemma \ref{lemma1} and the definition of c.s stochastic service curve. 
\end{proof}

\subsection{Superposition Property}\label{Superposition}
The superposition property means that the superposition of flows can be represented using the same traffic model. With this property, the aggregate of multiple individual flows may be viewed as a single aggregate flow. Then the service guarantees for the aggregate flow can be derived in the same way as for a single flow. 

First, we only consider the aggregate of two flows, $F_1$ and $F_2$. Let $a_1(n)$, $a_2(n)$ and $a(n)$ be the arrival process of $F_1$, $F_2$ and the aggregate flow $F_A$, respectively.

For any packet $p^n$ of the aggregate flow $F_A$, it is either the $m^{th}$ packet from flow $F_1$ or the $(n-m)^{th}$
 packet from flow $F_2$, where $m\in[0,n]$, i.e.
\begin{displaymath}
a(n) = \max\{a_1(m),a_2(n-m)\}
\end{displaymath} 
For example, $a(1)$ is either $\max[a_1(0),a_2(1)]$ or $\max[a_1(1),a_2(0)]$ and the minimum of these two possibilities, 
\begin{displaymath}
a(1) = \inf\big\{\max[a_1(0),a_2(1)],\max[a_1(1),a_2(0)]\big\}.
\end{displaymath} 
We can see another example 
\begin{displaymath}
a(2) = \inf\big\{\max[a_1(0),a_2(2)],\max[a_1(1),a_2(1)],
\end{displaymath}
\begin{displaymath}
~~~~~~~~~~~~~~\max[a_1(2),a_2(0)]\big\}.
\end{displaymath}
Essentially, we have for any packet $n$ of the aggregate flow 
\begin{equation}
a(n) = \inf_{0\leq m \leq n}\big\{\max[a_1(m),a_2(n-m)]\big\}.
\label{eq:aggregatepkt}
\end{equation}
We generalize the result to the superposition of $N(\geq 2)$ flows
\begin{displaymath}
a(n) = \inf_{\sum m_i=n}\big\{\max[a_1(m_1),a_2(m_2),...,
\end{displaymath}
\begin{equation}
~~~~~~~~~~~~~~~~~~~~a_N(n-\sum_{i=1}^{N-1}m_i)]\big\}.
\label{eq:Naggregate}
\end{equation}

From Eq.(\ref{eq:Naggregate}), it is difficult to directly characterize the packet inter-arrival time of the aggregate flow. We know that if a flow has a v.s.d stochastic arrival curve, with Theorem \ref{vbctovsd}(2), this flow has a v.b.c stochastic arrival curve, for which the superposition property holds \cite{Jiang:book}. Thus, we can indirectly prove that the superposition property holds for the v.s.d stochastic arrival curve. 

If flow $i$ has a v.s.d stochastic arrival curve $a_i(n)\sim_{vd}\langle\lambda_i,h_i\rangle$ $i=1,2,...,N$, from Theorem \ref{vbctovsd}(2), flow $i$ has a v.b.c stochastic arrival curve $\alpha_i(t)$ with bounding function $f_i(x)=h_i\big(\sup_{\tau\geq 0}[\alpha_i(\tau+y)-\alpha_i(\tau)+1]\big)$ , where $\alpha_i(t)=\sup\{k:\lambda_i(k)\leq t\}$. According to Lemma \ref{superpositionMinPlus}, the aggregate flow has a v.b.c stochastic arrival curve $\alpha(t)=\sum_{i=1}^N\alpha_i(t)$ with bounding function $f(x)=f_1\otimes\cdot\cdot\cdot\otimes f_N(x)$. We apply Theorem \ref{vbctovsd}(1) and obtain the following result:
\begin{theorem}\label{superpositionmaxplus}
Consider $N$ flows with arrival processes 

$a_i(n)\sim_{vd}\langle\lambda_i,h_i\rangle$, $i=1,...,N$. For the aggregate of these flows, there holds $a(n)\sim_{vd}\langle\lambda,h\rangle$ with $\lambda(n)=\inf\{\tau:\alpha(\tau)\geq n\}$ and 
\begin{displaymath}
h(y)=f\Big(\sup_{k\geq 0}[\lambda(k)-\lambda(k-x)]\Big),
\end{displaymath}
where $\alpha(t) = \sum_{i=1}^N\alpha_i(t)$ and $f(x)=f_1\otimes\cdot\cdot\cdot\otimes f_N(x)$ with 
\begin{displaymath}
\alpha_i(t) = \sup\{k:\lambda_i(k)\leq t\}~~~~~~~~~~~~~~~~~~~~~~~~~~~
\end{displaymath}
\begin{displaymath}
f_i(x)=h_i\big(\sup_{\tau\geq 0}[\alpha_i(\tau+y)-\alpha_i(\tau)+1]\big).
\end{displaymath}
\end{theorem}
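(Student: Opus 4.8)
The plan is to avoid any direct manipulation of the aggregate arrival process in the time-domain. Eq.(\ref{eq:Naggregate}) expresses $a(n)$ as a nested $\inf$--$\max$ over all ways of splitting the first $n$ packets among the $N$ flows, and this max-plus combination is awkward to bound probabilistically. Instead I would route the entire argument through the space-domain, where superposition degenerates into ordinary addition of cumulative arrival functions, and then translate the result back. The two conversion directions of Theorem~\ref{vbctovsd}, together with the space-domain superposition Lemma~\ref{superpositionMinPlus}, supply exactly the three links that this detour requires.

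The key steps, in order, are as follows. First I would convert each per-flow hypothesis into the space-domain: applying Theorem~\ref{vbctovsd}(2) to $a_i(n)\sim_{vd}\langle\lambda_i,h_i\rangle$ gives, for every $i$, a v.b.c stochastic arrival curve $\alpha_i(t)=\sup\{k:\lambda_i(k)\leq t\}$ with bounding function $f_i(x)=h_i\big(\sup_{\tau\geq 0}[\alpha_i(\tau+y)-\alpha_i(\tau)+1]\big)$. Second, I would aggregate in the space-domain: since the cumulative arrival count of the aggregate is the sum of the per-flow counts, $\mathcal{A}(t)=\sum_{i=1}^N\mathcal{A}_i(t)$, Lemma~\ref{superpositionMinPlus} applies verbatim and endows the aggregate with a v.b.c stochastic arrival curve $\alpha(t)=\sum_{i=1}^N\alpha_i(t)$ and bounding function $f(x)=f_1\otimes\cdots\otimes f_N(x)$. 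Third, I would convert this aggregate result back to the time-domain: applying Theorem~\ref{vbctovsd}(1) to $\alpha$ yields a v.s.d stochastic arrival curve for the aggregate flow with $\lambda(n)=\inf\{\tau:\alpha(\tau)\geq n\}$ and bounding function $h(y)=f\big(\sup_{k\geq 0}[\lambda(k)-\lambda(k-x)]\big)$, which is precisely the claimed form.

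The step I expect to be the crux is justifying the passage from the time-domain aggregation rule Eq.(\ref{eq:Naggregate}) to the additive space-domain identity $\mathcal{A}(t)=\sum_{i=1}^N\mathcal{A}_i(t)$ that legitimizes invoking Lemma~\ref{superpositionMinPlus}. Concretely, one must check that the $\inf$--$\max$ combination of the per-flow arrival times is equivalent, under the packet-counting correspondence $\mathcal{A}(t)=\sup\{k:a(k)\leq t\}$, to simply summing the counting functions of the constituent flows. Once this equivalence is established, the non-linearity that made the direct time-domain attack intractable disappears, and the remaining two links reduce to mechanical invocations of the conversion theorems, with the bounding functions composing exactly as stated.
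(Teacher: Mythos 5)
Your proposal is correct and follows essentially the same route as the paper: convert each flow to a v.b.c stochastic arrival curve via Theorem~\ref{vbctovsd}(2), superpose in the space-domain via Lemma~\ref{superpositionMinPlus}, and convert back via Theorem~\ref{vbctovsd}(1). The only difference is that you explicitly flag the need to check that the $\inf$--$\max$ rule of Eq.~(\ref{eq:Naggregate}) is consistent with $\mathcal{A}(t)=\sum_{i}\mathcal{A}_i(t)$, a point the paper takes for granted.
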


\subsection{Leftover Service Characterization}
This subsection explores the leftover service characterization under aggregate scheduling. To ease the discussion, we consider the simplest case when there are two flows competing resource in a system under FIFO aggregation. Suppose that if packets arrive to the system simultaneously, they are inserted into the FIFO queue randomly. Consider a system fed with a flow $F_A$ which is the aggregation of two constituent flows $F_1$ and $F_2$. Suppose both the service characterization from the server and traffic characterization from $F_2$ are known. We are interested in characterizing the service time received by $F_1$, with which per-flow bounds for $F_1$ can be then easily obtained using earlier results derived in the previous subsections. 

\begin{theorem}\label{leftoverservice}
Consider a system $\mathcal{S}$ with input $F_A$ that is the aggregation of two constituent flows $F_1$ and $F_2$. Suppose $F_2$ has a (deterministic) arrival curve $\lambda_2(n)\in\mathcal{G}$, and the system provides to the input an i.d stochastic service curve $\gamma\in\mathcal{G}$ with bounding function $j(x)\in\bar{\mathcal{G}}$. Then if ~$\gamma\big(n+\sup[q:\lambda_2(q)\leq a_1(n)]\big)\in\mathcal{G}$, $F_1$ receives an i.d stochastic service curve $\gamma\big(n+\sup[q:\lambda_2(q)\leq a_1(n)]\big)$ with the same bounding function $j(x)$. 
\end{theorem}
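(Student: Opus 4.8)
The plan is to reduce the per-flow statement for $F_1$ to the aggregate i.d stochastic service curve of $F_A$, using the FIFO discipline together with the \emph{deterministic} constraint on $F_2$. Let $a(n)$, $d(n)$ denote the aggregate arrival and departure processes of $F_A$, and set $Q(n)\equiv\sup[q:\lambda_2(q)\le a_1(n)]$, so that the asserted leftover curve reads $\tilde{\gamma}(k)=\gamma(k+Q(n))$. Since $F_A$ is served FIFO, the $n^{th}$ packet $p_1^n$ of $F_1$ leaves exactly when the aggregate packet occupying its position, say position $N$, leaves; thus $d_1(n)=d(N)$ on every sample path. The first step is to record this identity and to invoke the hypothesis $\mathcal{S}\sim_{id}\langle\gamma,j\rangle$ (Definition \ref{StoSerCurve}), which gives $P\{d(N)-a\bar{\otimes}\gamma(N)>x\}\le j(x)$.

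Second, I would control the aggregate index $N$. Positions $0,\dots,N$ contain the packets $p_1^0,\dots,p_1^n$ of $F_1$ together with exactly those packets of $F_2$ that precede $p_1^n$ in FIFO order, and any such $F_2$ packet must have arrived by time $a_1(n)$. Because $F_2$ obeys the deterministic arrival curve $\lambda_2$, every $p_2^q$ with $a_2(q)\le a_1(n)$ satisfies $\lambda_2(q)\le a_2(q)\le a_1(n)$, hence $q\le Q(n)$; so at most $Q(n)$ packets of $F_2$ can lie ahead of $p_1^n$, yielding the sample-path count $N\le n+Q(n)$. The standing assumption $\gamma(n+Q(n))\in\mathcal{G}$ is precisely what makes $\tilde{\gamma}$ a legitimate wide-sense increasing service-curve function.

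Third, and this is where the real work lies, I would prove the deterministic inequality $a\bar{\otimes}\gamma(N)\le a_1\bar{\otimes}\tilde{\gamma}(n)$. Starting from $a\bar{\otimes}\gamma(N)=\sup_{0\le k\le N}\{a(k)+\gamma(N-k)\}$, bound each summand by a per-flow term: for a position $k\le N$ let $p_1^m$ be the first $F_1$ packet occupying a position $\ge k$; then $a(k)\le a_1(m)$ because arrival times are nondecreasing in position, while $p_1^0,\dots,p_1^{m-1}$ all sit below position $k$, forcing $k\ge m$ and $m\le n$. Combining $k\ge m$ with $N\le n+Q(n)$ gives $N-k\le(n-m)+Q(n)$, so $\gamma(N-k)\le\gamma((n-m)+Q(n))=\tilde{\gamma}(n-m)$, and therefore $a(k)+\gamma(N-k)\le a_1(m)+\tilde{\gamma}(n-m)\le a_1\bar{\otimes}\tilde{\gamma}(n)$. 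Taking the supremum over $k$ closes this step.

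Finally, on every sample path $d_1(n)=d(N)$ and $a_1\bar{\otimes}\tilde{\gamma}(n)\ge a\bar{\otimes}\gamma(N)$, so $d_1(n)-a_1\bar{\otimes}\tilde{\gamma}(n)\le d(N)-a\bar{\otimes}\gamma(N)$ and the event $\{d_1(n)-a_1\bar{\otimes}\tilde{\gamma}(n)>x\}$ is contained in $\{d(N)-a\bar{\otimes}\gamma(N)>x\}$; the aggregate bound of the first step then delivers $P\{d_1(n)-a_1\bar{\otimes}\tilde{\gamma}(n)>x\}\le j(x)$, which is the desired i.d stochastic service curve for $F_1$. I expect the main obstacle to be the index bookkeeping in the third step: matching aggregate positions to per-flow indices while correctly handling simultaneously arriving packets (broken at random), and keeping the shift $+Q(n)$ fixed at the target index $n$ throughout the convolution, since the superficially natural substitution $Q(n-m)$ in place of $Q(n)$ would push the inequality in the wrong direction.
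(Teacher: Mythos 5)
Your proof is correct and follows essentially the same route as the paper's: identify $d_1(n)$ with the departure of the aggregate packet at position $N=n+m$, use the deterministic curve $\lambda_2$ to bound the number $m$ of interfering $F_2$ packets by $\bar m=\sup[q:\lambda_2(q)\le a_1(n)]$, and transfer the aggregate i.d bound to $F_1$ via the comparison $a\bar{\otimes}\gamma(N)\le a_1\bar{\otimes}\tilde{\gamma}(n)$. The only difference is that your third step carefully spells out the position-to-index bookkeeping behind that convolution inequality, which the paper asserts in a single line from $\gamma_1(n)\ge\gamma(n+m)$.
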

\begin{proof}
Suppose packet $p_1^n$ is the $(n+m)^{th}$ packet of $F_A$, i.e., $a(n+m)=a_1(n)$, where $m$ represents the number of packets from $F_2$. As the system provides an i.d stochastic service curve $\gamma(n)$ to the aggregate flow $F_A$, there holds
\begin{displaymath}
P\{d(n+m) - a\bar{\otimes}\gamma(n+m) > x\}\leq j(x). 
\end{displaymath}
$a_1(n)=a(n+m)$ indicates $a_2(m)\leq a_1(n)$. Let $\bar{m}=\sup[q:\lambda_2(q)\leq a_1(n)]$. As $\lambda_2$ is the (deterministic) arrival curve of $F_2$, we have $\bar{m}\geq m$ because of $a_2(m)\geq \lambda_2(m)$. Then $\gamma(n+\bar{m})\geq\gamma(n+m)$. Let $\gamma_1(n)=\gamma(n+\bar{m})$. From $\gamma_1(n)\geq \gamma(n+m)$, we have $a_1\bar{\otimes}\gamma_1(n)\geq a\bar{\otimes}\gamma(n+m)$. As $d(n+m)=d_1(n)$, there holds 
\begin{displaymath}
d_1(n) - a_1\bar{\otimes}\gamma_1(n) \leq d(n+m) - a\bar{\otimes}\gamma(n+m).
\end{displaymath} 
Thus, we conclude 
\begin{displaymath}
P\{d_1(n) - a_1\bar{\otimes}\gamma_1(n) > x\}\leq j(x)
\end{displaymath}
and complete the proof.
\end{proof}

\subsection{Discussion}
In this section, we have presented the five basic properties of stochastic network calculus under various traffic models and server models defined in the time-domain and introduced some simple applications.~For example, a GCRA-constrained flow has a deterministic arrival curve. If a flow's packet inter-arrival times are exponentially distributed, then this flow has a v.s.d stochastic arrival curve. The service process of an error-prone wireless link can be modeled by an i.d stochastic service curve. 

For each basic property, we investigated one combination of a specific traffic model and a specific server model. Particularly, we proved that the service guarantees and the output characterization hold for the combination of v.s.d stochastic arrival curve and i.d stochastic service curve. For the concatenation property, we investigated the case that all servers provide the constrained service curve to their input but did not specify the type of arrival curve. In order to prove the superposition property, we used the transformation between v.s.d stochastic service curve and v.b.c stochastic service curve. The leftover service characterization was only proved for the combination of deterministic arrival curve and i.d stochastic service curve.  

With the relationships and transformations among models established in Sec.~\ref{Section-Model}, these five properties may be directly or indirectly proved for other combinations of traffic models and server models. For example, it is easy to prove the service guarantees and output characterization for the combination of m.s.d stochastic arrival curve and i.d stochastic service curve.~Considering space limitation, these results are not included. However, to prove the concatenation property and superposition property for other server models and traffic models, it will require additional transformations among models.~For the leftover service characterization, we may need more constraints or transformations when proving it for other combinations of traffic models and server models. We leave these as our future work. 
    
\section{Conclusion}\label{Sec-Conclusion}
For stochastic service guarantee analysis, we introduced several {\em time-domain} models for traffic and service modeling. The essential idea of them is to base the model on cumulative packet inter-arrival time for traffic and on cumulative service time for service. Simple examples have been given to demonstrate the use of them.~Based on the proposed time-domain models, the five basic properties for stochastic network calculus were derived, with which, the results can be easily applied to both the single-node and the network cases. 

As Example 5 showed, we can directly obtain the service curve in the time-domain. With the result of service guarantees, the probabilistic delay bound and probablistic backlog bound can be readily obtained. We believe, the proposed time-domain models and derived results can be particularly useful for analyzing stochastic service guarantees in systems, where the behavior of a server involves some stochastic processes which can be directly characterized in the time-domain, while it is difficult to characterize such stochastic processes in the space-domain. Such systems include wireless links and multi-access networks where backoff schemes may be employed. 

In this paper, we only analyzed a simple case of wireless network to illustrate how to apply the proposed server model to characterize the service process of a wireless node. The future work is to investigate the performance of some typical contention-based multi-access networks including 
IEEE 802.11 networks.



\bibliographystyle{IEEEtran}
\bibliography{sigproc.bib}
%



\end{document}